\crefname{section}{Section}{Sections}
\crefname{figure}{Figure}{Figures}
\crefname{equation}{Equation}{Equations}
\crefname{property}{Property}{Properties}
\crefname{algorithm}{Algorithm}{Algorithms}
\def\BibTeX{{\rm B\kern-.05em{\sc i\kern-.025em b}\kern-.08em
    T\kern-.1667em\lower.7ex\hbox{E}\kern-.125emX}}
\newcommand{\model}{\mathcal{M}}
\newcommand{\features}{\mathcal{F}}
\newcommand{\clauses}{\mathcal{D}}
\newcommand{\clause}{\mathcal{D}}
\newcommand{\literals}{\mathcal{L}}
\newcommand{\literal}{\ell}
\newcommand{\interaction}{I}
\newcommand{\validInteractions}{\mathcal{I}}
\newcommand{\sample}{S}
\newcommand{\exclusiveInteractions}{E}
\newcommand{\alg}[1]{\text{\textsc{#1}}}
\newcommand{\lbValue}{LB}
\newcommand{\lbIP}{\alg{OptLB}}
\newcommand{\lbIS}{\alg{LBSearch}}
\newcommand{\ie}[1]{(i.e., #1)}
\newcommand{\eg}[1]{(e.g. #1)}
\newcommand{\cf}[1]{(cf. #1)}
\newcommand{\omitted}[1]{}
\newtheorem{property}{Property}
\newtheorem{definition}{Definition}
\newtheorem{lemma}{Lemma}
\newtheorem{theorem}{Theorem}
\title{How Low Can We Go? \\ Minimizing Interaction Samples for Configurable Systems}
\author{
Dominik Krupke\\
TU Braunschweig, Germany\\
\texttt{krupke@ibr.cs.tu-bs.de}
\and
Ahmad Moradi\\
TU Braunschweig, Germany\\
\texttt{moradi@ibr.cs.tu-bs.de}
\and
Michael Perk\\
TU Braunschweig, Germany\\
\texttt{perk@ibr.cs.tu-bs.de}
\and
Phillip Keldenich\\
TU Braunschweig, Germany\\
\texttt{keldenich@ibr.cs.tu-bs.de}
\and
Gabriel Gehrke\\
TU Braunschweig, Germany\\
\texttt{ggehrke@ibr.cs.tu-bs.de}
\and
Sebastian Krieter\\
TU Braunschweig, Germany\\
\texttt{sebastian.krieter@tu-braunschweig.de}
\and
Thomas Thüm\\
TU Braunschweig, Germany\\
\texttt{thomas.thuem@tu-braunschweig.de}
\and
Sándor P. Fekete\\
TU Braunschweig, Germany\\
\texttt{s.fekete@tu-bs.de}
}
\date{}
\begin{document}

\maketitle
\begin{abstract}
Modern software systems are typically configurable, a fundamental prerequisite for wide applicability and reusability. 
This flexibility poses an extraordinary challenge for quality assurance, as the enormous number of possible configurations makes it impractical to test each of them separately.
This is where \emph{t-wise interaction sampling} can be used to systematically cover the configuration space and detect unknown feature interactions. 
Over the last two decades, numerous algorithms for computing small interaction samples have been studied, providing improvements for a range of heuristic results; nevertheless, it has remained unclear how much these results can still be improved.

We present a significant breakthrough: a fundamental framework,
based on the mathematical principle of \emph{duality},
for combining near-optimal solutions with provable lower bounds on 
the required sample size. This implies that we no longer need
to work on heuristics with marginal or no improvement, but 
can certify the solution quality by establishing a limit on the remaining gap;
in many cases, we can even prove optimality of achieved solutions.
This theoretical contribution also provides extensive practical improvements:
Our algorithm SampLNS was tested on 
\num[text-series-to-math=true]{47} small and medium-sized configurable systems from the
existing literature.
SampLNS can reliably find samples of smaller size than previous methods
in \SI[text-series-to-math=true]{85}{\percent} of the cases;
moreover, we can achieve and prove optimality of solutions for \SI[text-series-to-math=true]{63}{\percent} of all instances.
This makes it possible to avoid cumbersome efforts of
minimizing samples by researchers as well as practitioners, and substantially
save testing resources for most configurable systems.
\end{abstract}

\section{Introduction}

Modern software systems often provide hundreds or thousands of configuration
options \ie{\textit{features}} that enable customization of these systems to
specific user requirements~\cite{HZS+:EMSE16,QPV+:SoSyM17,sundermann2023evaluating}.
A challenge for testing these highly configurable
systems are potential interactions between features, which can lead to
unforeseen behavior and faults in configurations containing a particular
feature combination
\cite{BDC+:SETSS89,CKMR03,ABKS13,AMS+:TOSEM18}.
Due to the enormous number of potential feature interactions in
configurable systems, detecting such faults is challenging~\cite{AMS+:TOSEM18}.  A popular solution for finding faults in a
configurable system is product-based testing in combination with sampling
methods to create a representative list of configurations \ie{a
\textit{sample}}, which are then tested individually~\cite{RAK+:VaMoS13,MMCA:IST14,VAT+:SPLC18}.
The number of configurations, known as the sample size,
is a common proxy for approximating the testing effort.
Thus, minimizing the sample size can reduce the overall testing effort~\cite{HLHE:VaMoS13}.
However, to ensure reliability,
it is typically necessary to test each interaction between up to $t$ features~\cite{OGB:SPLC19,FVDF:JSS21}.
This yields the {\sc t-wise Interaction Sampling Problem}
($t$-ISP):
find a {sample} consisting of a 
minimum number of configurations, such that every valid combination of $t$ 
or less features is part of at least one configuration, where $t$ is an arbitrary but fixed positive integer.

In addition to its practical importance for modern software systems, Interaction Sampling
is also highly challenging from a theoretical point of view, as it contains the classical problem 
\textsc{Set Cover} as a special case: Given a universe $\mathcal{U}$ of elements $\{u_1,\ldots,u_n\}$
and a family $\mathcal{S}=\{S_1,\ldots,S_m\}$ of subsets 
of $\mathcal{U}$, each with a nonnegative cost 
$c(S_j)$, find a subfamily $\mathcal{C}:=\{S_j\mid j\in J\}$ with $J\subset \{1,\ldots,m\}$,
such that $\bigcup_{j\in J} S_j=\mathcal{U}$ and $\sum_{j\in J} c(S_j)$ is minimal.
\textsc{Set Cover} is an \NP-hard problem~\cite{karp1972reducibility}, so we cannot
expect to develop a deterministic algorithm that can guarantee to compute a provably
optimal solution in a worst-case runtime whose asymptotic growth is bounded by a polynomial
in the input size. This also implies that not only may it be difficult to \emph{find}
a good solution, it is also hard to \emph{certify} that an identified solution is indeed optimal,
because the logical complement (prove that there is \emph{no} solution that is better)
is co\NP-hard.

This combination of practical importance with theoretical difficulty of $t$-ISP
makes it plausible to resort to mere heuristics, such as greedy or evolutionary
approaches, which aim at finding decent solutions by
local improvement. (Even though some algorithms may be able to generate 
optimal samples~\cite{yamada2015optimization} for small instances
by solving a large incremental SAT formulation, their usefulness
is limited by their lack of scalability.) 
While such heuristics may yield decent samples~\cite{VAT+:SPLC18},
they cannot be expected to reliably compute optimal solutions; what is more, they
do not provide any estimate on solution quality, so there is no way of knowing how much
potential for further improvement exists - or even to certify whether an obtained
solution happens to be optimal. 
As a consequence, 
the sample size achieved by a new algorithm for $t$-ISP
can only be compared to other existing algorithms~\cite{MKR+:ICSE16}.
However, this means the achieved sample size is compared only against an
\textit{upper bound} \ie{the smallest known sample size for which there exists
a sample that achieves full $t$-wise coverage}, but not to a lower
bound or even an optimal sample. 
Currently, there exist dozens of different $t$-wise
sampling algorithms~\cite{VAT+:SPLC18,AZAB17,LFRE:IWCT15}, for which the
absolute performance 
is unknown, effectively using a lower bound of zero. 
In the absence of better lower bounds, it is completely unclear whether investigating 
performance improvements of these algorithms is a promising or a futile endeavor.

This is where Mathematical Optimization and Algorithm Engineering 
offer mathematically rigorous and practically useful methods. Over the years,
these areas have developed more refined ways to deal with the practical difficulties of 
finding good solutions for relevant instances of hard optimization problems.
One success story is offered by the well-known \textsc{Traveling Salesman Problem} of 
finding a shortest roundtrip through a given number of locations,
where such methods have made it possible to compute \emph{provably optimal} solutions
to instances with as many as 13,509 cities~\cite{applegate1998solution},
which amounts to both \emph{finding} a best of 13,509! $\approx 10^{50,000}$ possible
roundtrips and also \emph{proving} that none of the others is better; 
see~\cite{cook2011traveling} for full details. 

A crucial concept in this endeavor
is the notion of \emph{duality}. For a given ``primal'' minimization problem $\mathcal{A}$
with a cost function $a$,
the idea is to identify a suitable ``dual'' maximization problem $\mathcal{B}$ with a cost
function $b$, such that for any (not necessarily optimal) 
feasible solution $x$ of $\mathcal{A}$ and any 
feasible solution $y$ of $\mathcal{B}$, the objective values $a(x)$ of $x$ and
$b(y)$ of $y$ satisfy $a(x)\geq b(y)$. 
Thus, once a suitable problem $\mathcal{B}$ has been identified and duality has been established, 
\emph{any} feasible solution $y$ of $\mathcal{B}$ provides
a lower bound for the optimal value of $\mathcal{A}$, independent of how it was obtained. This
key observation allows it to combine a variety of approaches (including local
search for feasible solutions of $\mathcal{A}$ and $\mathcal{B}$) with mathematically rigorous
quality guarantees of the obtained feasible solutions. In particular, if 
we manage to find feasible solutions $x^*$ for $\mathcal{A}$ and $y^*$ for $\mathcal{B}$,
such that $a(x^*)=b(y^*)$, then any feasible solution $x$ of  $\mathcal{A}$ must
satisfy $a(x)\geq b(y^*)=a(x^*)$, so $y^*$ provides a certificate for the optimality
of $x^*$ for $\mathcal{A}$; this holds regardless of whether $x^*$ and $y^*$
were obtained by a time-consuming, systematic search or by a lucky guess.
Even if establishing optimality by matching dual solutions turns out
to be time consuming or elusive, we may be able to quickly establish a 
limited gap to an optimal solution; this is particularly useful for difficult
optimization problems for which finding \emph{some} good solution is
usually achieved much faster than proving that \emph{no} better solution exists.
(See \cref{subsec:duality} for a slightly more detailed technical introduction.)

\paragraph{Our Contributions}

We present a powerful algorithm for computing upper and lower bounds
for the problem of minimizing the sample size in software configuration problems such as
the $t$-ISP, along with a mathematically rigorous framework for
providing performance metrics.
While achieving tight lower or upper bounds (and thus, optimality)
cannot be guaranteed in general,
the practical applicability of our method is underscored through the provision of optimal or nearly optimal samples across a broad range of benchmark instances,
accompanied by quality certificates.
Notably, we were able to obtain a provably minimal sample size for the feature model \texttt{EMBToolkit} with \num{1179} features and \num{5414} clauses.
In detail, we provide the following.
\begin{itemize}
    \item We present a strategy for computing provable lower bounds for the optimum sample size in the $t$-ISP,
based on establishing a suitable problem and a mathematical proof of duality.
    \item We introduce an algorithm that minimizes the sample size with full-coverage guarantees and provides a quality certificate, i.e., lower bound.
    \item We demonstrate the practical usefulness of our approach by a comparison of \num{47} small to medium-sized configurable systems.
    The outcome are substantial improvements (on \SI{60}{\percent} of the instances we have an improvements of at least \SI{20}{\percent}), and tight performance bounds for \SI{55}{\percent} of the instances. 
    \item We provide all source code and data for reproduction and verification of our results.
\end{itemize}

\section{Preliminaries}

In this section, we formally define feature models and the $t$-wise interaction sampling problem.
We also introduce key algorithmic techniques that will be employed later in our approach,
including Mixed Integer Programming and Constraint Programming.
While these techniques can have exponential runtimes,
making them impractical for certain problem sizes, 
they remain powerful tools for solving many otherwise intractable problems.
Additionally, we present Large Neighborhood Search as a practical heuristic for scaling these techniques,
and discuss the concepts of duality and certificates,
which can be used to establish quality guarantees even when applying heuristic methods like LNS.

\subsection{Validity of Configurations}

The dependencies in a reconfigurable system can be expressed by
defining the set of valid configurations as those that satisfy a Boolean
formula.
To this end,
a \emph{feature model} $\model=\left( \features, \clauses \right)$ consists of
$n$ features $\features=\{1,2,\ldots, n\}$ and $m$ dependencies
$\clauses=\{\clause_1, \clause_2, \ldots, \clause_m\}$.  We denote the literals
of $\model$ by $\literals=\{-n$, $\ldots$, $-1,1$,$\ldots n\}$, where $i\in
\literals$ represents the \emph{activated} feature $|i|$, if $i>0$ and the 
\emph{deactivated}
feature otherwise.  A configuration can be expressed as a set of \emph{literals}
$C\subset \literals$, such that for every feature $i\in \features$ either $i\in
C$ or $-i \in C$.  For simplicity, let the dependencies $\clauses$ be given in
conjunctive normal form, for which each clause $\clause_i\in \clauses$ is a subset
of literals $\literals$.
Every Boolean formula can be expressed in this form, possibly with 
the help of auxiliary features~\cite{KKS+:ASE22}.
We denote the set of all valid configurations by $||\model{}||$.
The validity of a configuration $C$ is determined by $\forall \clause_i \in \clauses:
\clause_i\cap C\not= \emptyset$.
Thus, a valid configuration has to contain a literal of each clause
$\clause_i\in  \clauses$.  Note that the definition based on the conjunctive
normal form is purely for readability; our actual implementation supports
the direct use of more advanced constraints.

\subsection{\texorpdfstring{$t$}{t}-Wise Interaction Samples}

In $t$-wise interaction sampling, an \emph{interaction} $\interaction$ is
represented by a subset $I\subset \literals$ of $t$ literals; it is considered
\emph{valid} iff there exists a valid configuration $C\in ||\model{}||$ with $I\subseteq C$.
Correspondingly, an interaction $I$ is considered \emph{invalid}, if no valid
configuration $C$ with $I\subseteq C$ exists and thus, it can never appear in any
instantiation of the system.  For a set of valid interactions
$\validInteractions$, a \emph{sample} $\sample=\{C_1, C_2, \ldots\}\subseteq ||\model{}||$ is a set
of valid configurations such that $\forall I\in \validInteractions: \exists
C_i\in \sample: I\subset C_i$, i.e., every interaction of $\validInteractions$
is contained in at least one of the configurations.
We especially consider \emph{complete pairwise interaction sampling}, for which $t=2$
and $\validInteractions$ is the set of all valid interactions on the
concrete features of $\features$, i.e., the features that are not added as
composite or auxiliary features~\cite{TKES:SPLC11}.  The underlying
optimization problem $t$-ISP is to find a sample $\sample$ for $\validInteractions$ 
with a minimum number of configurations.

\subsection{Algorithmic Techniques}

We employ a number of powerful algorithmic techniques and implementations.
Fundamentally important are \emph{SAT solvers} for finding feasible assignments
or proving unsatisfiability for propositional formulas in conjunctive normal form.
SAT solvers are typically based on an algorithm called Conflict-Driven Clause
Learning~\cite{marques1999grasp}, which learns new clauses that must also be
satisfied whenever all clauses of the original formula are.  Modern
implementations, such as
Kissat~\cite{BiereFazekasFleuryHeisinger-SAT-Competition-2020-solvers}, can
routinely solve instances with hundreds of thousands of variables in acceptable
time.  Such a solver is also utilized by other sampling algorithms, such as ICPL~\cite{MHF:SPLC12} or
YASA~\cite{KTS+:VaMoS20}, and is implicitly used by us via \emph{CP-SAT}.

Also relevant are \emph{Constraint Programming} (CP)  solvers, such as CP-SAT 
(a part of Google's ortools~\cite{ortools}), which facilitate not only 
considering more complex constraints and variables than a SAT solver, 
but also optimizing an objective
function; this enables us to find not only feasible but actually optimal
assignments. By using a portfolio of techniques, especially \emph{lazy clause
generation}~\cite{feydy2009lazy}, CP-SAT can solve
many optimization models with hundreds of thousands of variables.  We show in
\cref{sec:upper-bounds:cpsat} how to use this tool to compute samples of
minimum size for small feature models.

A further highly effective technique to compute optimal solutions for
combinatorial optimization problems is \emph{Mixed Integer Programming} (MIP).
For a given set of linear constraints, such as $\sum_{j=1}^n a_{ij}x_j
\leq b_i$, the goal is to minimize or maximize a linear objective function $\sum_{i=1}^n c_ix_i$ 
over a set of real or integer variables $x_i$.  MIP is
a well-known \NP-hard problem, but commercial MIP solvers (such as IBM~ILOG~CPLEX or
Gurobi~\cite{gurobi}) can routinely solve instances of small to medium size in
acceptable time to (near-) optimality.  These solvers are based on the
branch-and-bound algorithm template, refined with a large portfolio of
techniques, including the generation of additional, valid \emph{cut} constraints 
and sophisticated presolving.

Although neither CP-SAT nor MIP solvers scale effectively for our use case, we can still employ them
for finding good upper and lower bounds 
by making them part of a \emph{Large Neighborhood Search} (LNS)~\cite{pisinger2019large},  
a local optimization heuristic in which we repeatedly aim to replace a part of the
current solution via an optimization problem.
This is also called \emph{destroy} and \emph{repair}, as we essentially destroy a part of a solution by erasing some variable assignments, and then repair it via an optimization that tries to find the best values for the freed variables.
For example, for our problem the destroy operation will delete a number of configurations from the solution, and the repair operation will try to find a minimal set of configurations to cover the now-missing interactions again.
We can scale the extent of this destruction and the size of the resulting
optimization problem for repair so that it can be solved to (near-)optimality using
CP-SAT or MIP.  The advantage of this approach over others
(e.g., Genetic Algorithms or Simulated Annealing) is that we do
not sequentially go through a pool of solutions, but can search in an
exponentially-sized space of improvements for the best solution, as they are
encoded implicitly by the repair problem.
There are generic implementations of LNS, such as
Relaxation Induced Neighborhood Search~\cite{danna2005exploring}, that have
become important parts of modern MIP solvers; using
domain-knowledge, one can engineer even more powerful implementations, 
as we demonstrate in this paper.

\subsection{Duality and Quality Certificates}
\label{subsec:duality}

As described in the introduction, an essential concept utilized in this paper 
is \emph{duality}. 

\begin{definition} 
\label{def:duality}
Let $\mathcal{A}: \min\{a(x)\mid x\in X\}$ be
a (primal) minimization problem with feasible set $X$ and cost function $a$.
Then a maximization problem $\mathcal{B}: \max\{b(x)\mid y\in Y\}$ 
with feasible set $Y$ and cost function $b$ is \emph{dual}
to $\mathcal{A}$, if for any $x\in X$ and any $y\in Y$, 
the objective values $a(x)$ of $x$ and
$b(y)$ of $y$ satisfy $a(x)\geq b(y)$.
\end{definition}

For a classical example of duality, consider the problem \textsc{Minimum Vertex
Cover}, a special case of \textsc{Set Cover}, which aims at finding a smallest
subset of vertices $S\subseteq V$ in a graph $G=(V,E)$, such that any edge $e\in
E$ contains one of the vertices in $S$, and the problem \textsc{Maximum
Matching}, which asks for a maximum set $M\subseteq E$ of disjoint edges. If $S$
is \emph{any} feasible vertex cover and $M$ is \emph{any} feasible matching,
then no two edges in $M$ can contain the same vertex in $S$, so 
$S$ must contain at least one separate vertex for each edge in $M$,
and $|S|\geq |M|$.) 

By definition, any feasible solution for the dual provides a lower bound
for the primal (i.e., the original) problem, and thus a quality
certificate. Moreover, a suitable solution to the dual can provide a short 
certificate of optimality for a suitable solution 
to the primal problem.

\begin{lemma} 
\label{lem:duality}
Consider a primal problem $\mathcal{A}: \min\{a(x)\mid x\in X\}$
and a dual problem $\mathcal{B}: \max\{b(x)\mid y\in Y\}$.
Let $x^*\in X$ and $y^*\in Y$ such that $a(x^*)=b(y^*)$.
Then $x^*$ is optimal for $\mathcal{A}$.
\end{lemma}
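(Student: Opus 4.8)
The plan is to derive the optimality of $x^*$ directly from the weak-duality inequality guaranteed by \cref{def:duality}, combined with the matching-objective hypothesis $a(x^*)=b(y^*)$. The core observation is that $y^*$ is a \emph{single, fixed} feasible dual solution whose objective value simultaneously lower-bounds the cost of \emph{every} primal feasible solution; once this value coincides with the cost of $x^*$, no feasible primal solution can have smaller cost, which is exactly what optimality means.

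Concretely, I would first fix an arbitrary feasible solution $x\in X$ of the primal problem $\mathcal{A}$. Applying the duality relation from \cref{def:duality} to the pair $(x,y^*)$ — which is legitimate since $x\in X$ and $y^*\in Y$ — yields $a(x)\geq b(y^*)$. Substituting the hypothesis $a(x^*)=b(y^*)$ then gives $a(x)\geq a(x^*)$. Since $x$ was chosen arbitrarily in $X$, this inequality holds for all feasible primal solutions, which is precisely the statement that $x^*$ attains the minimum of $a$ over $X$; hence $x^*$ is optimal for $\mathcal{A}$.

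I expect no genuine obstacle here, as the argument is a textbook instance of weak duality; the only points requiring mild care are the \emph{direction} of the inequality (the dual value bounds the primal value from below, not above) and the quantifier structure, namely that the one certificate $y^*$ must work against every primal competitor $x$ — which is exactly what \cref{def:duality} supplies by quantifying over all pairs $(x,y)$. The same reasoning, applied symmetrically with the roles of $x$ and $y$ exchanged, would also show that $y^*$ is optimal for $\mathcal{B}$, although the lemma only asserts the primal direction.
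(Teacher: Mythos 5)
Your proposal is correct and follows exactly the same route as the paper's proof: fix an arbitrary $x\in X$, apply the weak-duality inequality to the pair $(x,y^*)$ to obtain $a(x)\geq b(y^*)=a(x^*)$, and conclude optimality of $x^*$. The additional remarks on inequality direction, quantifier structure, and the symmetric statement for $y^*$ are accurate but not needed.
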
 

\begin{proof}
Consider any feasible solution $x\in X$ of $\mathcal{A}$.
By assumption, $\mathcal{B}$ is dual to $\mathcal{A}$,
so $a(x)\geq b(y^*)=a(x^*)$, proving that $x^*$ optimal.
\end{proof}

If the objective value of a dual $\mathcal{B}$ coincides with the 
primal optimum (and thus, provides a certificate of optimality),
the lower bound provided by $\mathcal{B}$ is \emph{tight}; 
otherwise, we have a \emph{duality gap}. An example 
arises for \textsc{Minium Vertex Cover} and 
\textsc{Maximum Matching} if the graph $G$ is
bipartite (so it does not contain an odd cycle), where both optima 
always coincide; this is called \emph{strong duality}.
On the other hand, we may get a duality gap for non-bipartite graphs 
such as a triangle (for which a vertex cover requires two vertices, 
while a matching cannot contain more than one edge), so
we only have \emph{weak duality}.

For \NP-hard optimization problems, the existence of strong duality would imply
$\NP = \coNP$, which is considered similarly improbable as $\P = \NP$.  In this
paper, we explore the concept of mutual exclusiveness, defining a clique
problem (or an independent set problem on the dual graph) as a weakly dual
problem to our interaction sampling problem.  This duality provides mathematically
rigorous quality and optimality certificates for our empirical evaluation.  

\section{Lower bounds: Certifying the quality of solutions}
For a specific instance $M$ of a minimization problem,
a lower bound, $\lbValue \in \mathbb{R}$, ensures that no solution to $M$
can have an objective value less than $\lbValue$. For an \NP-hard
minimization problem (such as $t$-ISP), one cannot expect a general 
method for computing tight lower bounds, as this would amount
to membership in \coNP. However, this does not rule out
the possibility of determining good (and even tight)
lower bounds for large sets of relevant benchmark instances.

In this section, we show how we can obtain such useful lower bounds for the
$t$-ISP.
A key observation is the fact that there may be some pairs of interactions
$I,J\in \validInteractions$ that do not both occur in any valid configuration.  We call such interactions
\emph{mutually exclusive}; a whole set
$\exclusiveInteractions\subseteq \validInteractions$ of interactions is mutually exclusive 
if all pairs within are.

The notion of ``mutual exclusiveness'' has previously appeared in some related literature containing SAT-based approaches for combinatorial testing \cite{ansotegui2022incomplete, yamada2015optimization}.
Given a set of $t$ (not necessarily binary) feature variables, assigning different values to these variables results in distinct interactions.
No single configuration can cover any two of these interactions simultaneously.
Thereby, maximum number of such possible (valid) interactions over a fixed set of $t$ feature variables provides a simple lower bound.
While effective for non-binary features of large domains, often offering lower bounds close to optimal sizes \cite{ansotegui2022incomplete}, its application to binary features is limited, yielding a maximum lower bound of $2^t$.
Our work extends ``mutual exclusiveness'' to any pair of interactions that cannot coexist in a single configuration, potentially offering improved lower bounds for both binary and non-binary features.

Our lower bound is based on the observation that any interaction in a set of
mutually exclusive and valid interactions, $\exclusiveInteractions$, has to be part
of a different configuration in any complete $t$-wise interaction sample.
Hence, any sample needs at least $|\exclusiveInteractions|$ configurations.

\begin{theorem}
The problem of finding a maximum-cardinality set of mutually exclusive \(t\)-wise interactions is (weakly) dual to finding a minimum-cardinality complete \( t \)-wise interaction sample.
\end{theorem}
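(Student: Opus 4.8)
The plan is to recast the two problems as a primal--dual pair in the exact sense of \cref{def:duality} and then verify the single defining inequality by a direct counting argument. I would take the primal $\mathcal{A}$ to be the minimization problem whose feasible set $X$ consists of all complete $t$-wise samples $\sample\subseteq ||\model||$ \ie{all sets of valid configurations covering every interaction in $\validInteractions$}, with cost $a(\sample)=|\sample|$. The dual $\mathcal{B}$ is the maximization problem whose feasible set $Y$ consists of all mutually exclusive sets $\exclusiveInteractions\subseteq\validInteractions$ of valid interactions, with cost $b(\exclusiveInteractions)=|\exclusiveInteractions|$. By \cref{def:duality}, establishing that $\mathcal{B}$ is dual to $\mathcal{A}$ reduces to one statement: for \emph{every} feasible $\sample\in X$ and \emph{every} feasible $\exclusiveInteractions\in Y$, we have $a(\sample)=|\sample|\ge|\exclusiveInteractions|=b(\exclusiveInteractions)$. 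This inequality is the entire content of the claim.

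To prove it, I would build an injective map from interactions to configurations. Fix arbitrary feasible $\sample$ and $\exclusiveInteractions$. Because $\sample$ is a \emph{complete} sample, every interaction $\interaction\in\exclusiveInteractions\subseteq\validInteractions$ is contained in at least one configuration of $\sample$; I would select one such witness and call it $\phi(\interaction)\in\sample$, so that $\interaction\subseteq\phi(\interaction)$, defining a map $\phi\colon\exclusiveInteractions\to\sample$. I would then argue that $\phi$ is injective: if $\phi(\interaction)=\phi(\interaction')=\configuration$ for distinct $\interaction,\interaction'\in\exclusiveInteractions$, then $\interaction\subseteq\configuration$ and $\interaction'\subseteq\configuration$, so the single valid configuration $\configuration\in\sample$ would contain both $\interaction$ and $\interaction'$ --- contradicting that $\exclusiveInteractions$ is mutually exclusive, i.e., that no two of its interactions co-occur in any valid configuration. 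Injectivity then yields $|\exclusiveInteractions|=|\phi(\exclusiveInteractions)|\le|\sample|$, which is precisely the inequality required above.

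The one point demanding care --- and the only genuine obstacle --- is the non-uniqueness of the covering configuration: an interaction may be covered by several members of $\sample$, so $\phi$ is a \emph{choice} of witness rather than a canonical object. The key observation that makes the argument go through is that injectivity does not depend on \emph{which} witnesses are chosen, since any collision $\phi(\interaction)=\phi(\interaction')$ immediately exhibits a common covering configuration and thus violates mutual exclusiveness regardless of the choices made. This mirrors the \textsc{Vertex Cover}/\textsc{Maximum Matching} instance of \cref{subsec:duality}, with configurations playing the role of cover vertices and mutually exclusive interactions the role of the disjoint matching edges. Finally, since the argument only bounds $|\sample|$ from below by $|\exclusiveInteractions|$ and never asserts equality of the two optima, it establishes exactly \emph{weak} duality, as claimed; strong duality would contradict the expected separation $\NP\ne\coNP$ noted in \cref{subsec:duality}.
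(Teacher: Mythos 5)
Your proof is correct and is essentially the paper's argument in contrapositive form: the paper assumes $|\exclusiveInteractions| > |\sample|$ and invokes the pigeonhole principle to find one configuration covering two mutually exclusive interactions, while you make the same counting explicit by constructing an injective witness map $\phi\colon\exclusiveInteractions\to\sample$. The combinatorial core --- that no valid configuration can contain two mutually exclusive interactions --- is identical, so this is the same approach presented directly rather than by contradiction.
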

\begin{proof}
We prove this by contradiction.
Assume there exists a set \( E\subseteq \validInteractions \) of mutually exclusive interactions and a complete \( t \)-wise interaction sample \( S \) such that \( |E| > |S| \).
Since \( S \) must cover all feasible interactions \(\validInteractions\), including those in \( E \), there must be a feasible configuration \( C \in S \) that contains two interactions \( I, J \in E \) (by the pigeonhole principle).
Since \( I \) and \( J \) are mutually exclusive, \( C \) cannot be feasible, which is a contradiction and concludes the proof.
\end{proof}

For some pairs of interactions, mutual exclusiveness can be established
efficiently, e.g., if a clause from $\clauses$ is violated once all
literals from $I \cup J$ are part of the configuration, or that there exist two
contradicting literals $\literal \in I$ and $-\literal \in J$.  In theory,
deciding mutual exclusiveness is \coNP-hard and can require an expensive call to a SAT
solver to determine that there cannot exist a valid configuration $C\in ||\model{}||$ with $I \cup
J\subseteq C$.
In practice, we can mitigate this complexity, as false negatives are not critical for correctness; at worst, they may only weaken the proven bound by unnecessarily excluding interactions from the set $E$.

Identifying a maximum-cardinality set of mutually exclusive interactions is
closely related to several well-known \NP-hard problems, such as \textsc{Clique} and \textsc{Independent Set}~\cite{karp1972reducibility}. As it is 
\NP-hard itself, even if the complexity of detecting mutual
exclusiveness is ignored, it is non-trivial to obtain good lower bounds 
according to the idea outlined above.

In the following, we carefully characterize different sources of 
mutual exclusiveness among interactions. This, in turn, describes the space of all possible sets of mutual exclusive interactions.
Later we discuss how to efficiently and effectively explore such space of lower bound generating sets and  close the section with an example.

\subsection{Characterizing Mutual Exclusiveness}
\label{sec:mutual-exclusiveness}
In this section, we elaborate on the notion of mutual exclusiveness of
$t$-wise interactions and characterize different cases in which it occurs among
valid interactions. For brevity, the following notation assumes $t = 2$;
this does not impede the validity for other interactions (i.e., $t = 1$ or $t \geq 3$).  With the aid of our
characterization, the problem of finding a maximum cardinality set of mutually
exclusive interactions can be formulated as a binary program, a special case of
Mixed Integer Programming.  This allows us to approach the computation of good
lower bounds on the sample size using MIP solvers.

Consider a pair of valid interactions $I$ and $J\in \validInteractions$.  $I$ and $J$ may be mutually
exclusive, because there is an \emph{invalid} interaction $\{p,q\}$ with $p \in
I$ and $q \in J$, i.e., a combination that cannot appear together in any valid
configuration ($\forall C\in ||\model{}||: \{p,q\}\not\subseteq C$).
While this is not always the case, it explains the simplest form of mutual
exclusiveness that can be quickly detected.
Let us capture this form as the following property.
\begin{property} \label{pr:triangle-square}
A pair of valid interactions $I$ and $J\in \validInteractions$ satisfying
\[\exists p \in I ~\exists q \in J : \{p,q\} \notin \validInteractions\]
are mutually exclusive.
\end{property}

Mutual exclusiveness of the interactions $I$ and $J$ can occur even if the
above property does not hold.  Consider a pair $I,J$ of interactions that do
not satisfy \cref{pr:triangle-square} and a feature literal, $\literal \notin I
\cup J$.  If both $\{\literal\} \cup I \cup J$ and $\{-\literal\} \cup I \cup
J$ include some invalid interaction, $I$ and $J$ are also mutually exclusive since no valid configuration including $I\cup J$ could exist containing neither $\{\literal\}$ nor $\{-\literal\}$.

The same line of reasoning can be generalized to a generic exclusiveness
argument as follows.  For a subset of features $F\subseteq \mathcal{F}$, let $A(F)$ be the set of all
possible assignments of values to these features, with each assignment being 
a set of literals that contains a literal for each feature $x \in F$.
For example, $A(\{x\}) = \{\{x\},\{-x\}\}$ and $A(\{x,y\}) = \{\{x,y\}, \{-x,y\}, \{x,-y\}, \{-x,-y\}\}$.

Given a pair of valid interactions $I$ and $J$, let $\overline{\mathcal{F}}_{I,J}$
denote the set of all features {not yet assigned a value} by the assignment $I \cup J$.
Then $I$ and $J$ are mutually exclusive if there is a \emph{blocking} subset
of features $F \subseteq\overline{\mathcal{F}}_{I,J}$ for which any possible
extension of $I \cup J$ that also assigns values to the features $F$ contains an invalid interaction, as follows.
\begin{align*}
    P(I,J) &:=&  
    &~\exists F \subseteq \overline{\mathcal{F}}_{I,J} ~ \forall L \in A(F): & 
    \exists \{p, q\} \subset L \cup I \cup J \text{ with } \{p,q\} \notin \validInteractions. &
\end{align*}
%
%
%
Satisfying $P(I,J)$ means that the partial assignment $I \cup J$ cannot be
extended to a valid configuration.  However, there may still be some cases of
mutual exclusiveness that cannot be discovered by predicate $P(I,J)$.
To capture these additional cases, consider the situation in which predicate $P(I,J)$, as defined above, is false.
It means that there is an assignment $L \in A(\overline{\mathcal{F}}_{I,J})$, such that all $t$-wise interactions in $I \cup J \cup L$ are valid.
We can extend $P(I,J)$ to a predicate $D(I,J)$ that recognizes all cases of mutual exclusiveness by checking whether all such extensions $I \cup J \cup L$ are invalid configurations.
Note that the predicates $P(I,J)$ and $D(I,J)$ consider a large number of cases and can thus be costly to evaluate.
However, weaker forms of $P(I,J)$ can be evaluated by only searching for blocking sets $F$ of
limited size, e.g., of size $0$ (resulting in \cref{pr:triangle-square}),
$1$ or $2$, which can be done efficiently.
Now we can formulate the problem of finding a maximum cardinality set of mutually exclusive interactions as a binary program.
\begin{align}
\max \qquad & \sum_{\interaction \in \validInteractions} x_{\interaction} \label{eq:cds-ip:obj}\\
\forall I, J \in \validInteractions\text{ with }\neg Q(I, J): \qquad & x_{I} + x_{J} \leq 1, \label{eq:cds-ip:valid}\\
\forall I \in \validInteractions: \qquad & x_{I} \in \{0,1\}.\label{eq:cds-ip:vars}
\end{align}
Here $Q(I,J)$ is a predicate indicating mutual exclusiveness of interactions
$I, J$, e.g., $D(I,J)$ or, more practically, a restricted version of
$P(I,J)$ such as \cref{pr:triangle-square}.  In this formulation, the binary
variable $x_I$ decides whether the interaction $I$ is included in a desired set
of mutually exclusive interactions.  \cref{eq:cds-ip:obj} maximizes the number
of interactions in the desired set, while \cref{eq:cds-ip:valid} enforces
mutual exclusiveness for any pair of selected interactions. Here the set of
all valid interactions, $\validInteractions$, can efficiently be extracted from
a given feasible sample.  For any set of valid interactions
$\validInteractions'$, let $\lbIP(\validInteractions')$ denote the set of all
interactions selected by the binary program.

In practice, the above binary program can be very hard to solve, mainly due to
the large number of valid interactions $\validInteractions$ in many practically
relevant instances.  Therefore, we consider restricting the above
predicates and the binary program to a smaller subset of valid interactions
$\validInteractions' \subseteq \validInteractions$. 
By using the above binary program --- with \cref{eq:cds-ip:valid} generated by \cref{pr:triangle-square} ---
to optimize over different subsets of valid interactions,
we can construct a Large Neighborhood Search algorithm that is very effective.
We elaborate on the details of this algorithm later in this section.

\subsection{Extended Use of Invalid Interactions}\label{sec:start-solution}

The role of invalid interactions is not limited to detecting mutual
exclusiveness of valid interactions.  Invalid interactions also suggest a
constructive search pattern 
for heuristically finding large sets of mutually exclusive interactions as follows.

Consider a set of mutually exclusive interactions $\exclusiveInteractions$, and
a feature literal $\literal$. The set $\exclusiveInteractions$ is called
\emph{feature fixed} at $\literal$, denoted as
$\exclusiveInteractions_{\literal}$, if all interactions $I$ in
$\exclusiveInteractions$ contain $\literal$, i.e., $\forall I \in
\exclusiveInteractions: \literal \in I$.  Feature-fixed sets of mutually
exclusive interactions are of interest, because \cref{pr:triangle-square} immediately
provides the following.

\begin{property}\label{pr:invalidTransaction}
For any invalid interaction $\{p,q\}$, any feature-fixed sets
$\exclusiveInteractions_p$ and $\exclusiveInteractions_q$ are disjoint and
$\exclusiveInteractions_p \cup \exclusiveInteractions_q$ is mutually exclusive.
\end{property}

Starting at a feature literal $p$ and having a feature-fixed
$\exclusiveInteractions_p$ computed, this property suggests considering any
possible invalid interaction $\{p,q\}$ to extend $\exclusiveInteractions_p$ by
adding interactions from any $\exclusiveInteractions_q$. 

As a simple example, consider three features $\features=\{1,2,3\}$ and no constraints between them.
Thus, the only invalid interactions are of the form $\{i,-i\}$ for each $i \in \features$.
According to \cref{pr:invalidTransaction}, one could simply start with $E_{1} = \{ \{1,2\}, \{1,-2\} \}$, a mutually exclusive set of interactions fixed at feature $1$, and then be guided by an invalid interaction like $\{1,-1\}$ to safely report $E_{1} \cup E_{-1}$ as a bigger set of mutually exclusive interactions.
Here this new set could be $E = \{ \{1,2\}, \{1,-2\}, \{-1,2\}, \{-1,-2\} \}$; it is easy to verify that these interactions are mutually exclusive.
Furthermore, the sample $S = \{\{1,-2,3\},\{-1,-2,-3\},\{1,2,-3\},\{-1,2,3\}\}$ covers all valid pairwise interactions with four configurations.
Hence, for this example, it gives us a matching lower and upper bound on the
number of configurations in a sample, showing that both $E$ and
$S$ have optimum cardinality.

To obtain large mutually exclusive sets of interactions
by combining feature-fixed sets, it is sensible to focus on large
feature-fixed sets $\exclusiveInteractions_p$, $\exclusiveInteractions_q$.
Given a feature literal $\literal$, 
a maximum cardinality $\exclusiveInteractions_{\literal}$ could be obtained by a simple
adaptation of the above-defined binary program.
From another point of view, any maximum cardinality set of mutually exclusive
interactions can be written as the union of carefully selected feature-fixed
$E_{\literal}$'s over all $\literal$.

Now we describe an algorithm called $\lbIS$ to search for large
sets of mutually exclusive interactions.  We start with an empty set
$\exclusiveInteractions$ and repeat the following steps over each feature
literal $p$.  We first construct the maximum cardinality
$\exclusiveInteractions_p$ and try to add it to $\exclusiveInteractions$.  For
any invalid interaction $\{p,q\}$, we then construct the maximum cardinality
$\exclusiveInteractions_q$ and try to add $\exclusiveInteractions_q$ to
$\exclusiveInteractions$.  During the above two steps, it may be impossible to
add some interaction $I$ to $\exclusiveInteractions$ because $I$ may not be
mutually exclusive with some other interaction $J$ already in
$\exclusiveInteractions$. In this case, the conflicting interaction $J \in
\exclusiveInteractions$ is recorded.
After a pre-specified number of adding operations, with some fixed probability, either the highest conflicting portion of $\exclusiveInteractions$ is removed, or $\exclusiveInteractions$ is emptied.

\subsection{Large Neighborhood Search for Lower Bounds}

While $\lbIS$ often yields large sets of mutually exclusive interactions, it can still be improved.
In the following, we describe how $\lbIP(\validInteractions')$, which solves the binary program defined by \cref{eq:cds-ip:obj,eq:cds-ip:valid,eq:cds-ip:vars}, can be utilized in a large neighborhood search to obtain even larger sets.

\alg{LB-LNS} (\cref{alg:lb-lns}) starts with an initial solution from the lower bound heuristic in
\cref{sec:start-solution}.  Subsequently, it iteratively removes some part
$\exclusiveInteractions'$ of the current best solution and replaces it with an
optimal solution from $\lbIP$.  Computing the candidate interactions
$\validInteractions'$, i.e., the set of all interactions that are mutually
exclusive to the remaining elements $\exclusiveInteractions \setminus
\exclusiveInteractions'$, can be done efficiently during the selection of
$\exclusiveInteractions'$.  At the end of each iteration, a new solution is
constructed by taking the union of $\exclusiveInteractions \setminus
\exclusiveInteractions'$ with the output of $\lbIP(\validInteractions')$.
The formal description is given in \cref{alg:lb-lns} and an example can be found in \cref{sec:lb:example}.

\begin{algorithm}[t]
\begin{algorithmic}
\STATE $\exclusiveInteractions=\lbIS()$   \hfill \textit{[Initial set of mut. excl.\ interactions]}
\WHILE{within time limit}
\STATE Select $\exclusiveInteractions'\subseteq \exclusiveInteractions$ \hfill \textit{[The part of the set we remove]}
\STATE $\validInteractions'=$ valid interactions that are mut.\ excl.\ to  $\exclusiveInteractions \setminus \exclusiveInteractions'$
\STATE $\exclusiveInteractions''=\lbIP\left(\validInteractions'\right)$\hfill \textit{[With short time limit]}
\STATE $\exclusiveInteractions=\left(\exclusiveInteractions\setminus \exclusiveInteractions'\right) \cup \exclusiveInteractions''$ \hfill \textit{[Combine to new set]}
\STATE Tune $\exclusiveInteractions'$-selection based on difficulty of computing $\exclusiveInteractions''$
\ENDWHILE
\RETURN $\exclusiveInteractions$
\end{algorithmic}
\caption{LB-LNS}\label{alg:lb-lns}
\end{algorithm}

The most challenging part of the algorithm is the selection of interactions to remove from a current solution.
The size of the remaining interactions $\validInteractions'$ must be small
enough so that $\lbIP$ can still solve it, but large enough to yield
any improvement for the current best solution.  Thus, the selection
procedure for $\exclusiveInteractions'$ is adapted according to the difficulty
of the integer program subroutine.
That is choosing a smaller or larger $E'$ for the next iteration.

\subsection{Example}\label{sec:lb:example}

For a small example of \cref{alg:lb-lns}, consider a mutually exclusive set of interactions
with the features $\mathcal{F}=\{1,2,3\}$ and a single clause
$\mathcal{D}=\{\{-1,-3\}\}$.  
The \emph{Compatibility Graph}, as shown in \cref{fig:lb:example}(a), has a vertex for each valid interaction and edges for pairs of interactions
that can appear in the same configuration. As a consequence, 
we are looking for an independent set, i.e., a subset
of vertices without any edges in between.
We start with a randomly obtained initial set of mutually exclusive interactions
$E=\{\{1,2\}, \{1, \textrm{-}2\}, \{\textrm{-}1, 3\}, \{\textrm{-}1,
\textrm{-}3\}\}$ as highlighted in red in \cref{fig:lb:example}(b).
We can see that it is indeed mutually exclusive as it is an independent set without any edges between its elements.
Now perform an iteration of LNS, by (randomly) selecting $E'=\{\{\textrm{-}1,
3\}, \{\textrm{-}1, \textrm{-}3\}\}$ as the subset to be removed, see
\cref{fig:lb:example}(c).  
This gives us the 
smaller subproblem with 
$\validInteractions'=\{\{\textrm{-}1, 2\},
\{\textrm{-}1, \textrm{-}2\}, \{\textrm{-}1, 3\},$ 
$ \{\textrm{-}1,\textrm{-}3\},
\{2, 3\}, \{\textrm{-}2, 3\}\}$, see \cref{fig:lb:example}(d).  This subproblem
consists of all interactions that are mutually exclusive with the whole
remaining set; usually this strongly reduces the number of vertices.
We compute on $\validInteractions'$ an optimal mutually exclusive set
$E''=\{\{\textrm{-}1, \textrm{-}3\},$
$ \{2,3\}, \{\textrm{-}2, 3\}\}$,
see \cref{fig:lb:example}(e).  Because $\validInteractions'$ is mutually exclusive
to the remaining set $E\setminus E'$, we can combine $\left(E\setminus
E'\right)\cup E''$ to a new, larger set, see \cref{fig:lb:example}(f).  This
could be repeated; in the example, we have already reached optimality.

\begin{figure*}
\centering
\includegraphics[page=2,width=.98\textwidth]{./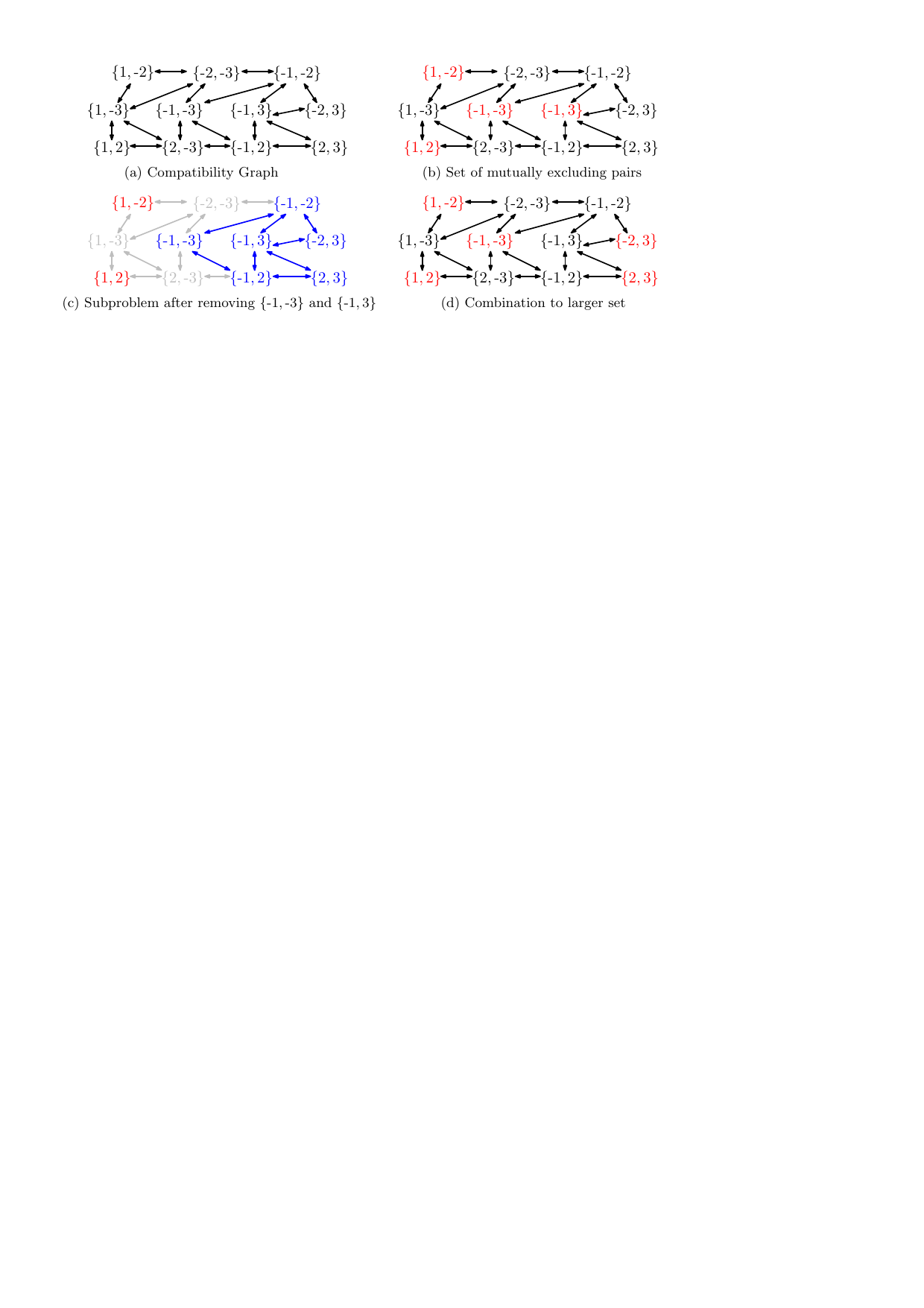}
\caption{
Example of the lower bound computation with \cref{alg:lb-lns}.
An edge in the compatibility graph indicates that the two interactions are compatible and can appear in the same configuration.
A set of interactions without any edges, like in (b), is mutually exclusive and a lower bound on the necessary number of configurations.
In this example, the initial set of mutually exclusive tuples $\{\{1,2\}, \{1, \textrm{-}2\}, \{\textrm{-}1, 3\}, \{\textrm{-}1,
\textrm{-}3\}\}$ increases to $\{\{1,2\}, \{1, \textrm{-}2\}, \{\textrm{-}1, \textrm{-}3\},$
$ \{2,3\}, \{\textrm{-}2, 3\}\}$.
}
\label{fig:lb:example}
\end{figure*}
\section{Computing (near-) optimal solutions}\label{sec:upper-bounds}

In the previous section, we explored the computation of large sets of mutually exclusive interactions,
which provide lower bounds on the achievable sample size but do not yield the samples themselves.
In this section, we focus on minimizing the actual sample size.

Both the lower bound computation and the sample minimization will be executed in parallel.
These processes will continue until they converge to the same value,
indicating an optimal sample size, or until a specified time limit is reached.
Due to the inherent duality of these problems, we can still bound the optimality gap,
even if convergence is not achieved within the time limit.

We begin by discussing the computation of optimal solutions for small instances,
which we achieve by modeling the problem as a (CP-)SAT formula.
This formulation is then used as a subroutine to compute (near-)optimal solutions for larger instances through a Large Neighborhood Search (LNS) approach.
In LNS, we iteratively solve sufficiently small parts of a larger instance to optimality,
progressively improving the overall solution quality.

\subsection{Modeling Pairwise Sampling in CP-SAT}\label{sec:upper-bounds:cpsat}

Finding a minimal sample to cover all valid interactions $\validInteractions$ can be described by a set of Boolean 
and linear constraints that can be solved by the constraint programming solver CP-SAT.
This does not scale well, but is a valuable subroutine in the later algorithm.

Let $x_1, x_2, \ldots, x_n\in \mathbb{B}$ represent the feature selection. 
Recall that a clause $\clause_j$ is a set of (negated or unnegated) literals, indicating 
the desired feature values.
For ease of notation we denote for any literal $\literal \in \clause_j$, $x_\literal$ to be the (possibly negated) 
representation of the feature $|\literal|$, i.e.
 $x_\literal = \overline{x_{|\literal|}}$ if $\literal < 0$.
We introduce further variables $y_\interaction \in \mathbb{B}$ for all interactions $\interaction \in \validInteractions$, 
which can only be true if the interaction $\interaction$ is covered by the assignments of $x_1, x_2,\ldots x_n$.
For an interaction  $\interaction$, this can be enforced by adding implications $y_\interaction \implies \bigwedge_{\literal \in \interaction} x_\literal$.

The idea of our CP-SAT formulation is to use $k$ copies of variables and constraints, where $k$ 
is an upper bound on the expected number of configurations.  If an
estimate for $k$ is too low, the model becomes
infeasible, and the solution process can be retried for increased $k$.

We can find a feasible sample with at most $k$ configurations by enforcing that for each interaction $\interaction$, at least one of the $k$ copies of $y_\interaction$ must be true.
For each of the $k$ copies, we introduce a further Boolean variable $u_i\in \mathbb{B}$ that indicates if the
copy is used to represent a sample configuration.
To minimize the number of sample configurations required to cover all interactions, we can then use the objective
to minimize the number of copies used.

\begin{align}
\min \quad & \sum_{i=1}^k u_i \label{eq:cp:obj}\\
\forall{i=1,\ldots k, \interaction \in \validInteractions}:\quad  & \overline{u_i} \implies \overline{y^i_\interaction} \label{eq:cp:y}\\
\forall{i=1,\ldots k, \clause_j \in \clauses}:\quad & \bigvee_{\literal \in \clause_j} x^i_\literal \label{eq:cp:conf}\\
\forall{i=1,\ldots k, \interaction \in \validInteractions}:\quad &y^i_\interaction \implies \bigwedge_{\literal \in \interaction} x^i_\literal  \label{eq:cp:cov}\\
\forall{\interaction \in \validInteractions}: \quad & \bigvee_{i=1}^k y^i_\interaction \label{eq:cp:t}
\end{align}

Here \cref{eq:cp:obj} minimizes the sample size, i.e., the number of active copies (indicated by $u_i\in \mathbb{B}$), 
\cref{eq:cp:y} prevents deactivated copies of sample configurations from covering interactions,
\cref{eq:cp:conf} ensures sample configurations are valid according to the feature model, 
\cref{eq:cp:cov} forces $y^i_\interaction$ to be false if the interaction $\interaction$ is not covered in the $i$-th configuration, 
and \cref{eq:cp:t} ensures each interaction $\interaction \in \validInteractions$ to be covered by at least one configuration.

For simplicity, we restrict ourselves to CNF; the solver CP-SAT actually allows complex constraints with more efficient formulations.
We use these more efficient constraints whenever possible in the actual implementation, especially for modeling the common ``alternative''-constraint.
However, if the instance can actually be expressed efficiently in CNF, using a SAT solver with iterative search for optimizing the objective, as done by Yamada et al.~\cite{yamada2015optimization}, may be more efficient than CP-SAT if the model only utilizes a CNF representation.
While the formulation can theoretically also be solved with Integer Programming, the strong Boolean logic of the problem makes it more suitable for SAT-based solvers as they are specifically designed for this.

A serious issue with this formulation is its high symmetry due to the copied parts:
Every solution, feasible or infeasible, has a number of representations exponential in $k$ 
by permuting the assignments of $x^i_1, \ldots, x^i_n$ over $i$.
This difficulty for the solver
is common, e.g., also for {\sc Graph Coloring}, for which permuting the
colors yields an exponential number of different but logically equivalent
solutions (cf.~\cite{law2006symmetry}).
We can break these symmetries and
greatly improve the performance of the solver 
by computing a set of mutually exclusive interactions $\exclusiveInteractions$ using \cref{alg:lb-lns}
and enforce each interaction of that set to be covered in a specific copy $i$.
\begin{equation}
\forall \interaction \in  \exclusiveInteractions: \quad y_{\interaction}^i = \texttt{true} \label{eq:symmetry}
\end{equation}

In the following, the computed optimal sample is denoted by $\text{\textsc{OptSample}}\left(\features, \clauses, \validInteractions, k\right)$.
If there is a time limit, it will return the best solution found, or $\bot$ if no sample 
that covers all interactions in $\validInteractions$ has been found.

When an initial solution is available -- whether precomputed or quickly obtained through one of the many available heuristics -- it can significantly assist \textsc{OptSample}.
This initial solution serves multiple purposes: it provides a sufficient value for $k$, offers a warm start for the solver,
and acts as a fallback, ensuring that a feasible solution is always returned,
even if \textsc{OptSample} does not terminate within the allotted time.

With this approach, most of our benchmark models with fewer than \num{20} features could be solved directly to provable optimality.
For larger models, the formulation size of $\Omega\left(n\times k +
|\validInteractions |\times k\right)$ becomes quickly prohibitive (mostly due
to the quickly growing $|\validInteractions |$).

\subsection{Large Neighborhood Search}

The preceding ideas can be extended to compute (near-)optimal solutions for even larger configurable systems.
Specifically, if we already have a sample,
we can use \textsc{OptSample} to optimize a large portion of it by a significantly smaller, practically tractable optimization model.
This approach allows us to efficiently explore parts of interest in the solution space with each iteration,
leading to rapid convergence toward a (near-)optimal solution.

Although \cref{eq:cp:conf} is the most complex constraint, it can generally be solved quickly, as demonstrated by algorithms like YASA~\cite{KTS+:VaMoS20}, ICPL~\cite{MHF:SPLC12}, IncLing~\cite{AKT+:GPCE16}, and Chvátal's algorithm~\cite{JHF:MODELS11}.
However, the crucial challenge lies in managing the number of additional constraints that arise from the large set of valid interactions $\validInteractions$.
To keep the optimization model in \textsc{OptSample} practically tractable, we need to reduce the size of $\validInteractions$.

An important observation is that the coverage of $\validInteractions$ increases rapidly with each additional configuration,
while only a small amount of coverage is lost even when multiple configurations are removed from a sample.
This behavior is illustrated for some concrete configurable systems in \cref{fig:coverage_examples}.

If we remove a subset $S' \subset S$ from a sample $S$, uncovering the interactions $\validInteractions' \subseteq \validInteractions$, we can use $\text{\textsc{OptSample}}\left(\features, \clauses, \validInteractions', |S'|\right)$ to repair the sample with as few configurations as possible.
For instance, in the \texttt{FreeBSD-8\_0\_0} sample shown in \cref{fig:coverage_examples},
removing \SI{50}{\percent} of the sample results in a loss of only about \SI{0.3}{\percent} of the coverage.
Consequently, we can apply $\text{\textsc{OptSample}}\left(\features, \clauses, \validInteractions', |S'|\right)$ to find the best replacement for \SI{50}{\percent} of the sample, using an optimization model that is only roughly \SI{0.3}{\percent} the size of the full problem.
This approach still provides significant optimization potential while drastically reducing the problem size.

We incorporate this idea into an algorithm (\cref{alg:samplns}) that iteratively removes parts of an initial solution and repairs it optimally.
An initial sample can be quickly computed using YASA~\cite{KTS+:VaMoS20} with the parameter $m=1$.
\begin{algorithm}[ht!] 
\begin{algorithmic}
\STATE $\sample=$ initial sample computed by YASA ($m=1$)
\STATE Start \textsc{LB-LNS} on a separate thread
\WHILE{within time limit \textbf{and} $|S|>$ lower bound}
\STATE Select $\sample'\subseteq \sample$ \hfill \textit{[The part of the solution we remove]}
\STATE $\validInteractions'=$ interactions covered by $\sample'$ but not by $\sample\setminus \sample'$
\STATE $k=|\sample'|$ \hfill \textit{[Upper bound on covering $\validInteractions'$]}
\STATE $\sample''=\text{\textsc{OptSample}}\left(\features, \clauses, \validInteractions', k\right)$ \hfill \textit{[short time limit]}
\STATE $\sample=\left(\sample\setminus \sample'\right) \cup \sample''$ \hfill \textit{[Combine to new solution]}
\STATE Tune $\sample'$-selection based on difficulty of computing $\sample''$
\ENDWHILE
\STATE Terminate \textsc{LB-LNS} and retrieve latest $E$
\RETURN optimized sample $\sample$, lower bound $|E|$
\end{algorithmic}
\caption{SampLNS}
\label{alg:samplns}
\end{algorithm}

A performance-critical part of \cref{alg:samplns} is the size of $\sample'$, the configurations that are removed 
from the current sample $\sample$ (similar to the selection of $E'$ in \cref{alg:lb-lns}).
This needs to be sufficiently small, such that 
$\text{\textsc{OptSample}}\left(\features, \clauses, \validInteractions', k\right)$ 
can compute an optimal (or at least a good) solution.
At the same time, it should be as large as possible to actually achieve improvements.
The difficulty of the resulting subproblem is hard to predict, so the selection should be dynamic and increase the size if the 
iterations are very fast, or decrease its size if the iterations are reaching the time limit.

As previously mentioned, \textsc{OptSample} can be supported by providing an initial solution,
ensuring that it will either return this solution or find a better one.
We use the removed part, $\sample'$, as the initial solution,
which guarantees that the new solution will either be an improvement upon or revert to $\sample$, since $\left(\sample \setminus \sample'\right) \cup \sample' = \sample$.

\subsection{Example}

For the benefit of the reader, we provide an iteration of the algorithm for a simple example.
Consider an instance with features $\mathcal{F}=\{1,2,3,4\}$ and two clauses $\mathcal{D}=\{\{1,2\}, \{3,4\}\}$.
This instance has \num{22} valid pairwise interactions $\validInteractions$ that need to be covered.
\begin{align*}
     \validInteractions = \{~& \{3, 4\}, \{1, \textrm{-}3\}, \{2, \textrm{-}4\}, \{1, 3\}, \{\textrm{-}2, 4\}, \{\textrm{-}1, 4\},  \{2, 4\}, \{1, 2\}, \{1, \textrm{-}4\}, \{\textrm{-}2, \textrm{-}3\}, \{\textrm{-}1, \textrm{-}3\}, \{\textrm{-}2, 3\}, \\
    &\{\textrm{-}1, 3\}, \{3, \textrm{-}4\}, \{\textrm{-}3, 4\}, \{2, \textrm{-}3\}, \{1, \textrm{-}2\}, \{1, 4\}~,  \{2, 3\}, \{\textrm{-}1, \textrm{-}4\}, \{\textrm{-}2, \textrm{-}4\}, \{\textrm{-}1, 2\} ~\}
\end{align*}
Assume that YASA has provided the following initial sample of size six:
\begin{align*}
\sample=\{&\{1,2,\textrm{-}3,4\}, \{1,\textrm{-}2,3,\textrm{-}4\}, \{1,\textrm{-}2,\textrm{-}3,4\}, \{\textrm{-}1,2,3,4\}, \{\textrm{-}1,2,3,\textrm{-}4\}, \{\textrm{-}1,2,\textrm{-}3,4\}\}.
\end{align*}
By looking at the interactions $\{1,2\}, \{1, 3\}, \{\textrm{-}2, \textrm{-}3\}, \{3,4\},$ $\{\textrm{-}1, \textrm{-}4\},$ and $ \{\textrm{-}1, \textrm{-}3\}$, it is straightforward to check that no configuration can be removed without missing an interaction, so there is no trivial improvement.
Now consider the three configurations 
\[\sample'=\{\{1,2,\textrm{-}3,4\},  \{\textrm{-}1,2,3,4\}, \{\textrm{-}1,2,3,\textrm{-}4\}\}\]
for possible removal.  This would leave the interactions
\[\validInteractions'=\{\{1,2\}, \{2,3\}, \{3,4\}, \{2,\textrm{-}4\}, \{\textrm{-}1,3\}, \{\textrm{-}1,\textrm{-}4\}\}\]
uncovered, for which we seek a cover with fewer configurations.
Solving 
\textsc{OptSample}$\left(\features, \clauses, \validInteractions', |\sample'|\right)$
gives us an optimal sample
$\sample''=\{\{1,2,3,4\}, \{\textrm{-}1,2,\textrm{-}3,4\}\}$ for $\validInteractions'$, which is smaller than the part we removed.
Notice how we removed half of the configurations, but only had to find a new sample for \num{6} out of the \num{22} interactions.
\Cref{fig:coverage_examples} shows that this effect is even stronger for real instances, explaining the effectiveness of our approach.
Combining yields a better sample for all interactions
\begin{align*}
\sample&=\left(\sample\setminus \sample'\right) \cup \sample''=\{~ \{1,\textrm{-}2,3,\textrm{-}4\}, \{1,\textrm{-}2,\textrm{-}3,4\},  \{\textrm{-}1,2,\textrm{-}3,4\},\{1,2,3,4\}, \{\textrm{-}1,2,\textrm{-}3,4\} ~\}.
\end{align*}
This is optimal for the example, but could be continued if necessary;
in that case, keeping track of the remaining computing time allows 
adjustment by selecting a smaller or larger $\sample'$ for the next iteration.

\begin{figure}
    \centering
    \includegraphics[width=0.4\columnwidth]{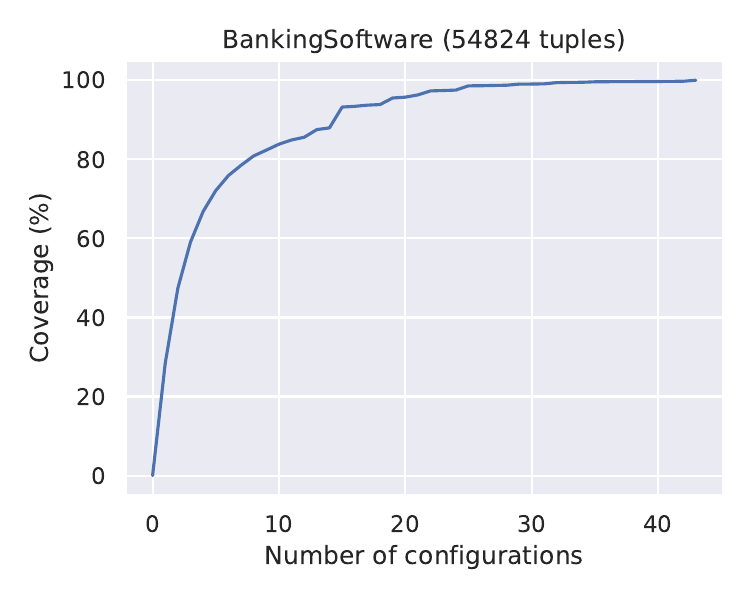}
    \includegraphics[width=0.4\columnwidth]{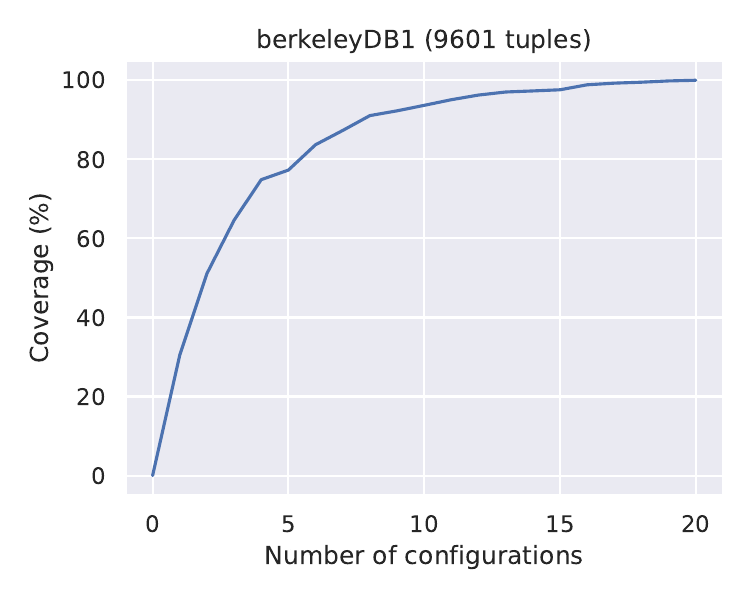}
    \includegraphics[width=0.4\columnwidth]{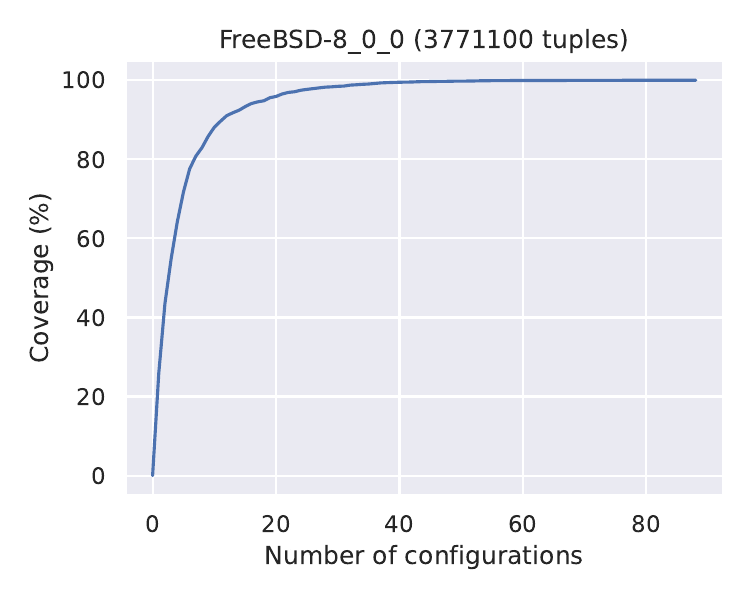}
    \includegraphics[width=0.4\columnwidth]{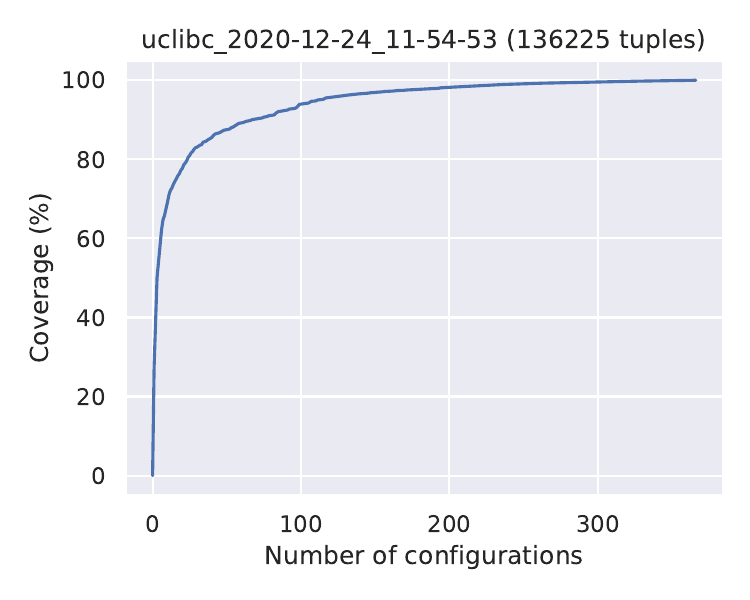}
    \caption{
    Examples of the coverage over the number of configurations.
    Each plot illustrates how coverage increases with the number of configurations:
    for instance, a value of \SI{90}{\percent} at \num{10} indicates that the first ten configurations in a randomly ordered sample would cover \SI{90}{\percent} of all feasible interactions.
    These samples were computed using YASA~\cite{KTS+:VaMoS20}.
    The rapid growth in coverage shows that even removing a large fraction of configurations leaves only a minor portion of interactions uncovered,
    which is crucial for the size of our constraint program.
    With few interactions becoming uncovered,
    the task of optimally recovering these through the constraint program becomes manageable.
    By shuffling the sample,
    we can repeatedly explore replacing large subsets of the sample with minimal configurations to achieve equivalent coverage.
    }
    \label{fig:coverage_examples}
\end{figure}

\section{Further Improvements}\label{sec:implementation}

The practical implementation can be tuned in various ways.

\paragraph{Instance Encoding and Preprocessing}

We accept feature models in the XML-format of FeatureIDE~\cite{MTS+17}, consisting of a feature hierarchy and a set of cross-tree constraints in propositional logic. 
This instance is simplified by some simple rules, e.g., replacing trivial clauses with only one element, or substituting literals with a bidirectional implication by the same variable. 
The feature hierarchy also defines the set of concrete features, whose interactions have to be covered.
Additionally, we also accept the DIMACS format of SAT solvers, which is a CNF formula with a set of variables and clauses.
This format is not preprocessed and all variables are considered to be concrete features.

\paragraph{Selecting \texorpdfstring{$\exclusiveInteractions'\subseteq \exclusiveInteractions$ in \cref{alg:lb-lns}}{E' from E for the lower bound}}

We need to select $\exclusiveInteractions'\subseteq \exclusiveInteractions$ in
\cref{alg:lb-lns}, such that we can still solve $\lbIP$, but as large as
possible to escape deeper local optima.  For this, we start with $E'=E$ and
iteratively remove random interactions from it, until $|\validInteractions'|$ is
below a threshold $\gamma=1000$.
Interactions that would result in $\validInteractions'=\emptyset$ are excluded from removal.
To scale the threshold $\gamma$ automatically to the right size, it is increased by \SI{10}{\percent} if an iteration uses less than \SI{50}{\percent} of its time,
 and decreased by \SI{10}{\percent} if it uses more than \SI{95}{\percent} of its time limit.


\paragraph{Selecting \texorpdfstring{$\sample'\subseteq \sample$}{S' from S} in SampLNS}

We use an analogous strategy for \cref{alg:samplns}.
Starting with $\sample'=\emptyset$, we repeatedly select a random configuration from $\sample\setminus \sample'$ and add it to $\sample'$,
as long
as the number of missing interactions in $\sample\setminus \sample'$ remains
below a threshold $\phi$.  Initially $\phi$ is set to \num{250}, but we
increase this value by \SI{25}{\percent} if \textsc{OptSample} computes an
optimal solution and decrease it again by \SI{25}{\percent} if
\textsc{OptSample} is not able to compute an optimal solution or improve the
sample.  Due to its exponential character, this simple trick allows us to 
automatically scale to the largest feasible size of $\sample'$ within a few iterations.

\paragraph{Symmetry Breaking in \text{\textsc{OptSample}}}

At the beginning of each iteration in \cref{alg:samplns}, we compute a set of
mutually exclusive interactions on $\validInteractions'$ for breaking
symmetries in \textsc{OptSample}, or skipping the optimization if we notice
that no improvement is possible.  For this, we first compute a simple greedy
solution (sorting based on the number of appearances in a sample, as
interactions with fewer appearances are more likely to be exclusive) and then
optimize it with \cref{alg:lb-lns} for up to \SI{10}{\percent} of the iteration time limit.

\paragraph{Parallelization}

Our implementation of SampLNS will spawn a thread running \cref{alg:lb-lns} (single threaded) and use all remaining cores to optimize the sample (CP-SAT can make use of multiple cores).
After each iteration of SampLNS, the latest lower bound of \cref{alg:lb-lns} is queried (used to check for premature termination due to optimality).
For smaller feature models, \textsc{OptSample} may yield better lower bounds than \cref{alg:lb-lns}, such that the lower bounds are in rare cases not actually computed by \cref{alg:lb-lns}.
The thread of \cref{alg:lb-lns} is independent of the additional use of \cref{alg:lb-lns} for symmetry breaking at the beginning of each iteration.

\section{Empirical Evaluation}
In the preceding sections, we presented a new parametric algorithm called SampLNS that is able to minimize samples and compute provable lower bounds for the achievable sample size.
In this section, we investigate the quality of the lower bounds and the minimization potential empirically.
We designed and performed an experiment to answer the following four research questions.

\newcommand{\rqone}{Can existing algorithms compute samples with a sample size equal or close to the lower bound}
\newcommand{\rqtwo}{Can SampLNS compute samples of smaller size than existing algorithms}
\newcommand{\rqthree}{What is the computational effort to derive lower bounds and optimize samples with SampLNS}
\newcommand{\rqfour}{Does SampLNS depend on the sample quality of YASA}

\begin{itemize}
    \item[RQ1] \rqone? 
    \item[RQ2] \rqtwo?
    \item[RQ3] \rqthree?
    \item[RQ4] \rqfour?
\end{itemize}

\subsection{Experiment Design}\label{subsec:experiment_design}

We used feature models from multiple sources that originate from different domains, namely finance, systems software, e-commerce, gaming, and communication~\cite{PTR+:SPLC19,PKR+:VaMoS21,KTS+:ESECFSE17,MTS+17,SLBWC:ICSE11} and encompass a wide range of number of features (9--1,408) and number of constraints (13--15,692) \cf{\autoref{tab:eval:new_benchmark_results}}.
For a wide range of feature model sizes, we selected small- and medium-sized feature models from examples provided by the tool FeatureIDE~\cite{MTS+17}.
We also used feature models from real-world Kconfig systems, provided by Pett~et~al.~\cite{PKR+:VaMoS21}, for which we chose the earliest and latest versions for each system.
In addition, we used more complex, real-world feature models~\cite{KTS+:ESECFSE17,SLBWC:ICSE11,PTR+:SPLC19}.
Knüppel~et~al.~\cite{KTS+:ESECFSE17} provide \num{116} different models for the eCos system.
Most of these models have similar properties, so we only used six models with different numbers of features.

To investigate the performance of existing algorithms and SampLNS, we selected multiple state-of-the-art algorithms for $t$-wise interaction sampling and compared their performance on the benchmark models.
In particular, we selected the following algorithms: \emph{Chv\'atal}~\cite{C:MOR79,JHF:MODELS11}, \emph{ICPL}~\cite{MHF:SPLC12}, 
IPOG~\cite{LK+:ECBS07}, 
\emph{IncLing}~\cite{AKT+:GPCE16}, and YASA~\cite{KTS+:VaMoS20}.
All of these are well-known greedy algorithms with different sampling strategies and were also used in previous evaluations~\cite{AKT+:GPCE16,ATL+:SoSyM19,MHF:SPLC12,KTS+:VaMoS20}.
Especially ICPL is regarded as one of the best $t$-wise sampling algorithms and was used as benchmark for many other algorithms that we did not include in this evaluation~\cite{GCD:ESE11,LK+:ECBS07,OZML:VaMoS11,HPP+:TSE14}.
Some other algorithms from previous evaluations either lack publicly available implementations or are incompatible with our Linux workstations~\cite{HLHE:VaMoS13,PSK+:ICST10,SCD:FASE12,CDS:TSE08,GCD:ESE11,OMR:SPLC10}, and therefore could not be included in our comparison.
However, based on prior evaluations, these missing algorithm are not expected to be competitive.

As our problem specification requires \SI{100}{\percent} pairwise coverage, we have chosen only algorithms that are guaranteed to generate such samples.
This excludes, for instance, algorithms that generate random samples~\cite{OGB:SPLC19} and most algorithms that employ local or population-based search~\cite{HPP+:TSE14,CR:IJAST14,EBG:CAiSE12,HPP+:SPLC13}.

YASA provides a parameter $m$ for decreasing the sample size through multiple sampling iterations.
We generate samples with YASA using multiple values for this parameter,
in particular, $1$, $3$, $5$, and $10$.
To ensure a more equitable comparison with SampLNS, which utilizes the full time limit for refinement,
we also implemented a variant of YASA that, instead of being constrained by a fixed number of sampling iterations \(m\),
performs as many iterations as possible within the given time limit.

We conducted three experiments as follows:
\begin{enumerate}
\item First, we establish provable lower bounds for each model by executing SampLNS five times with a generous time limit of \SI{3}{\hour} (\SI{180}{\second} for each optimization step) and using the best bounds obtained.
These lower bounds were subsequently employed in the other experiments to estimate the optimality gap.
\item Each of the previously mentioned algorithms, as well as SampLNS, was run five times for each model with a time limit of \SI{900}{\second} (\SI{60}{\second} for each optimization step).
The mean of the results was used to compare the performance of these algorithms.
For SampLNS, this time limit includes the runtime of YASA for obtaining the initial sample to ensure a fair comparison, even though YASA ($m=1$) typically completes in just a few seconds.
\item SampLNS was additionally run with four different algorithms for computing the initial sample, five times for each model, with a time limit of \SI{900}{\second} (\SI{60}{\second} for each optimization step).
This experiment aimed to assess the impact of the initial sample on the performance of SampLNS.
Importantly, the runtime of the initial sample computation was this time not counted within the time limit, ensuring that SampLNS had an equal opportunity for each initial sample.
\end{enumerate}

All experiments were executed on AMD~Ryzen~9~7900 
with \SI{96}{\giga\byte} DDR5 RAM and Ubuntu~22.04.
The code was written in Python~3.12 with native C++-17 elements using \mbox{PyBind11}~(v2.10.3), compiled by g++ (v11.3.0).
We used CP-SAT of or-tools~(v9.8) and the MIP solver Gurobi~(v11.0).
Code and data are available online (\url{https://github.com/tubs-alg/SampLNS}).
For the baseline algorithms and YASA, we use the FeatureIDE library, written in Java, which provides a uniform API to several sampling algorithms~\cite{AMK+:GPCE16,KPK+:SPLC17}.

\subsection{Results}

\subsubsection{(RQ1) \rqone} 
\begin{figure}
\centering
\includegraphics[width=0.6\columnwidth]{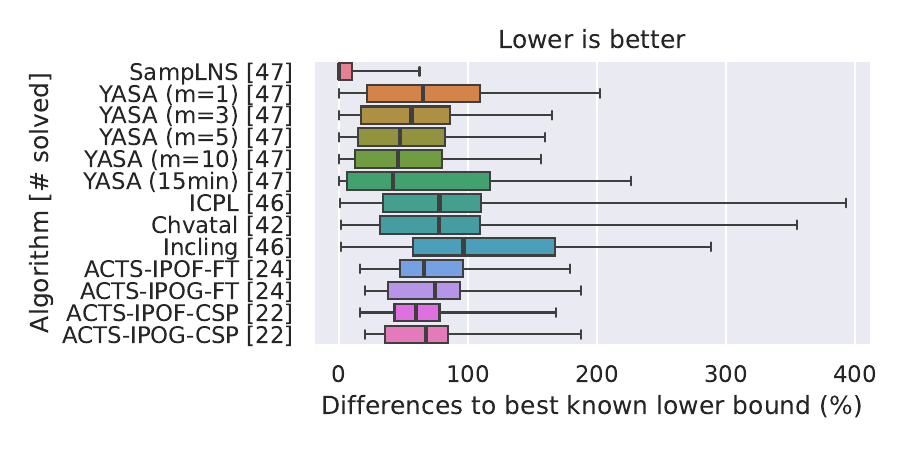}
\caption{
    Differences of the sample sizes of various sampling algorithms (with a \SI{900}{\second} time limit) to the best lower bound computed by SampLNS.
    Not all algorithms were able to compute a feasible sample within the time limit, thus, the number of successfully solved models is added in parentheses.
    We can see that SampLNS never timed out, and \SI{75}{\percent} of the samples are at most \SI{10}{\percent} above the lower bound.
    More than \SI{55}{\percent} of the samples match the lower bound and, thus, are minimal.
    The next best algorithm is YASA with $m=10$ (resp. the \SI{15}{\min}-variant) with a median difference of \SI{46}{\percent} (resp. \SI{42}{\percent}).
    Incling yields the largest samples, with a median of twice the size of the lower bound.
}\label{fig:eval:diff_to_lb}
\end{figure}
\begin{table*}
\sisetup{
exponent-mode = fixed,
fixed-exponent = 0,
}
\begin{tiny}
\begin{center}
\begin{tabular}{l r r r r r r r r}
              &               &              & Baseline & SampLNS UB & SampLNS LB &  & SampLNS & \\
Feature Model & $|\features|$ & $|\clauses|$ & \emph{min}     & \emph{mean} (\emph{min}) & \emph{mean} (\emph{max}) & Savings & UB/LB & Time to Bounds\\
\hline
calculate & \num[text-series-to-math=true]{9} & \num[text-series-to-math=true]{15} & \num[text-series-to-math=true]{9} & \textbf{\num[text-series-to-math=true]{5}}  (\textbf{\num[text-series-to-math=true]{5}}) & \textbf{\num[text-series-to-math=true]{5}}  (\textbf{\num[text-series-to-math=true]{5}}) & \SI{44}{\percent} (\SI{44}{\percent})& 1.00 (1.00)& $<\SI{1}{\second}$ (\SI{1}{\second})\\
lcm & \num[text-series-to-math=true]{9} & \num[text-series-to-math=true]{16} & \num[text-series-to-math=true]{8} & \textbf{\num[text-series-to-math=true]{6}}  (\textbf{\num[text-series-to-math=true]{6}}) & \textbf{\num[text-series-to-math=true]{6}}  (\textbf{\num[text-series-to-math=true]{6}}) & \SI{25}{\percent} (\SI{25}{\percent})& 1.00 (1.00)& $<\SI{1}{\second}$ ($<\SI{1}{\second}$)\\
email & \num[text-series-to-math=true]{10} & \num[text-series-to-math=true]{17} & \textbf{\num[text-series-to-math=true]{6}} & \textbf{\num[text-series-to-math=true]{6}}  (\textbf{\num[text-series-to-math=true]{6}}) & \textbf{\num[text-series-to-math=true]{6}}  (\textbf{\num[text-series-to-math=true]{6}}) & \SI{0}{\percent} (\SI{0}{\percent})& 1.00 (1.00)& $<\SI{1}{\second}$ ($<\SI{1}{\second}$)\\
ChatClient & \num[text-series-to-math=true]{14} & \num[text-series-to-math=true]{20} & \textbf{\num[text-series-to-math=true]{7}} & \textbf{\num[text-series-to-math=true]{7}}  (\textbf{\num[text-series-to-math=true]{7}}) & \textbf{\num[text-series-to-math=true]{7}}  (\textbf{\num[text-series-to-math=true]{7}}) & \SI{0}{\percent} (\SI{0}{\percent})& 1.00 (1.00)& \SI{1}{\second} (\SI{2}{\second})\\
toybox\_2006-10-31\ldots & \num[text-series-to-math=true]{16} & \num[text-series-to-math=true]{13} & \num[text-series-to-math=true]{9} & \textbf{\num[text-series-to-math=true]{8}}  (\textbf{\num[text-series-to-math=true]{8}}) & \textbf{\num[text-series-to-math=true]{8}}  (\textbf{\num[text-series-to-math=true]{8}}) & \SI{11}{\percent} (\SI{11}{\percent})& 1.00 (1.00)& \SI{1}{\second} (\SI{1}{\second})\\
car & \num[text-series-to-math=true]{16} & \num[text-series-to-math=true]{33} & \num[text-series-to-math=true]{6} & \textbf{\num[text-series-to-math=true]{5}}  (\textbf{\num[text-series-to-math=true]{5}}) & \textbf{\num[text-series-to-math=true]{5}}  (\textbf{\num[text-series-to-math=true]{5}}) & \SI{17}{\percent} (\SI{17}{\percent})& 1.00 (1.00)& $<\SI{1}{\second}$ ($<\SI{1}{\second}$)\\
FeatureIDE & \num[text-series-to-math=true]{19} & \num[text-series-to-math=true]{27} & \num[text-series-to-math=true]{9} & \textbf{\num[text-series-to-math=true]{8}}  (\textbf{\num[text-series-to-math=true]{8}}) & \textbf{\num[text-series-to-math=true]{8}}  (\textbf{\num[text-series-to-math=true]{8}}) & \SI{11}{\percent} (\SI{11}{\percent})& 1.00 (1.00)& \SI{271}{\second} (\SI{128}{\second})\\
FameDB & \num[text-series-to-math=true]{22} & \num[text-series-to-math=true]{40} & \textbf{\num[text-series-to-math=true]{8}} & \textbf{\num[text-series-to-math=true]{8}}  (\textbf{\num[text-series-to-math=true]{8}}) & \textbf{\num[text-series-to-math=true]{8}}  (\textbf{\num[text-series-to-math=true]{8}}) & \SI{0}{\percent} (\SI{0}{\percent})& 1.00 (1.00)& \SI{1}{\second} (\SI{1}{\second})\\
APL & \num[text-series-to-math=true]{23} & \num[text-series-to-math=true]{35} & \num[text-series-to-math=true]{9} & \textbf{\num[text-series-to-math=true]{7}}  (\textbf{\num[text-series-to-math=true]{7}}) & \textbf{\num[text-series-to-math=true]{7}}  (\textbf{\num[text-series-to-math=true]{7}}) & \SI{22}{\percent} (\SI{22}{\percent})& 1.00 (1.00)& \SI{1}{\second} (\SI{1}{\second})\\
SafeBali & \num[text-series-to-math=true]{24} & \num[text-series-to-math=true]{45} & \textbf{\num[text-series-to-math=true]{11}} & \textbf{\num[text-series-to-math=true]{11}}  (\textbf{\num[text-series-to-math=true]{11}}) & \textbf{\num[text-series-to-math=true]{11}}  (\textbf{\num[text-series-to-math=true]{11}}) & \SI{0}{\percent} (\SI{0}{\percent})& 1.00 (1.00)& $<\SI{1}{\second}$ ($<\SI{1}{\second}$)\\
TightVNC & \num[text-series-to-math=true]{28} & \num[text-series-to-math=true]{39} & \num[text-series-to-math=true]{11} & \textbf{\num[text-series-to-math=true]{8}}  (\textbf{\num[text-series-to-math=true]{8}}) & \textbf{\num[text-series-to-math=true]{8}}  (\textbf{\num[text-series-to-math=true]{8}}) & \SI{27}{\percent} (\SI{27}{\percent})& 1.00 (1.00)& \SI{16}{\second} (\SI{21}{\second})\\
APL-Model & \num[text-series-to-math=true]{28} & \num[text-series-to-math=true]{40} & \num[text-series-to-math=true]{10} & \textbf{\num[text-series-to-math=true]{8}}  (\textbf{\num[text-series-to-math=true]{8}}) & \textbf{\num[text-series-to-math=true]{8}}  (\textbf{\num[text-series-to-math=true]{8}}) & \SI{20}{\percent} (\SI{20}{\percent})& 1.00 (1.00)& \SI{14}{\second} (\SI{15}{\second})\\
gpl & \num[text-series-to-math=true]{38} & \num[text-series-to-math=true]{99} & \num[text-series-to-math=true]{17} & \textbf{\num[text-series-to-math=true]{16}}  (\textbf{\num[text-series-to-math=true]{16}}) & \textbf{\num[text-series-to-math=true]{16}}  (\textbf{\num[text-series-to-math=true]{16}}) & \SI{5.9}{\percent} (\SI{5.9}{\percent})& 1.00 (1.00)& \SI{3}{\second} (\SI{3}{\second})\\
SortingLine & \num[text-series-to-math=true]{39} & \num[text-series-to-math=true]{77} & \num[text-series-to-math=true]{12} & \textbf{\num[text-series-to-math=true]{9}}  (\textbf{\num[text-series-to-math=true]{9}}) & \textbf{\num[text-series-to-math=true]{9}}  (\textbf{\num[text-series-to-math=true]{9}}) & \SI{25}{\percent} (\SI{25}{\percent})& 1.00 (1.00)& \SI{8}{\second} (\SI{9}{\second})\\
dell & \num[text-series-to-math=true]{46} & \num[text-series-to-math=true]{244} & \num[text-series-to-math=true]{32} & \textbf{\num[text-series-to-math=true]{31}}  (\textbf{\num[text-series-to-math=true]{31}}) & \textbf{\num[text-series-to-math=true]{31}}  (\textbf{\num[text-series-to-math=true]{31}}) & \SI{3.1}{\percent} (\SI{3.1}{\percent})& 1.00 (1.00)& \SI{29}{\second} (\SI{45}{\second})\\
PPU & \num[text-series-to-math=true]{52} & \num[text-series-to-math=true]{109} & \textbf{\num[text-series-to-math=true]{12}} & \textbf{\num[text-series-to-math=true]{12}}  (\textbf{\num[text-series-to-math=true]{12}}) & \textbf{\num[text-series-to-math=true]{12}}  (\textbf{\num[text-series-to-math=true]{12}}) & \SI{0}{\percent} (\SI{0}{\percent})& 1.00 (1.00)& \SI{2}{\second} (\SI{2}{\second})\\
berkeleyDB1 & \num[text-series-to-math=true]{76} & \num[text-series-to-math=true]{147} & \num[text-series-to-math=true]{19} & \textbf{\num[text-series-to-math=true]{15}}  (\textbf{\num[text-series-to-math=true]{15}}) & \textbf{\num[text-series-to-math=true]{15}}  (\textbf{\num[text-series-to-math=true]{15}}) & \SI{21}{\percent} (\SI{21}{\percent})& 1.00 (1.00)& \SI{77}{\second} (\SI{137}{\second})\\
axTLS & \num[text-series-to-math=true]{96} & \num[text-series-to-math=true]{183} & \num[text-series-to-math=true]{16} & \num[text-series-to-math=true]{11}  (\num[text-series-to-math=true]{11}) & \num[text-series-to-math=true]{10}  (\num[text-series-to-math=true]{10}) & \SI{31}{\percent} (\SI{31}{\percent})& 1.10 (1.10)& \SI{20}{\second} (\SI{20}{\second})\\
Violet & \num[text-series-to-math=true]{101} & \num[text-series-to-math=true]{203} & \num[text-series-to-math=true]{23} & \num[text-series-to-math=true]{17}  (\num[text-series-to-math=true]{17}) & \num[text-series-to-math=true]{16}  (\num[text-series-to-math=true]{16}) & \SI{26}{\percent} (\SI{26}{\percent})& 1.06 (1.06)& \SI{476}{\second} (\SI{656}{\second})\\
berkeleyDB2 & \num[text-series-to-math=true]{119} & \num[text-series-to-math=true]{346} & \num[text-series-to-math=true]{20} & \textbf{\num[text-series-to-math=true]{12}}  (\textbf{\num[text-series-to-math=true]{12}}) & \textbf{\num[text-series-to-math=true]{12}}  (\textbf{\num[text-series-to-math=true]{12}}) & \SI{40}{\percent} (\SI{40}{\percent})& 1.00 (1.00)& \SI{162}{\second} (\SI{282}{\second})\\
soletta\_2015-06-2\ldots & \num[text-series-to-math=true]{129} & \num[text-series-to-math=true]{192} & \num[text-series-to-math=true]{30} & \textbf{\num[text-series-to-math=true]{24}}  (\textbf{\num[text-series-to-math=true]{24}}) & \textbf{\num[text-series-to-math=true]{24}}  (\textbf{\num[text-series-to-math=true]{24}}) & \SI{20}{\percent} (\SI{20}{\percent})& 1.00 (1.00)& \SI{21}{\second} (\SI{60}{\second})\\
BattleofTanks & \num[text-series-to-math=true]{144} & \num[text-series-to-math=true]{769} & \num[text-series-to-math=true]{451} & \num[text-series-to-math=true]{320}  (\num[text-series-to-math=true]{295}) & \num[text-series-to-math=true]{256}  (\num[text-series-to-math=true]{256}) & \SI{29}{\percent} (\SI{35}{\percent})& 1.25 (1.15)& \SI{887}{\second} (\SI{160}{\minute})\\
BankingSoftware & \num[text-series-to-math=true]{176} & \num[text-series-to-math=true]{280} & \num[text-series-to-math=true]{40} & \textbf{\num[text-series-to-math=true]{29}}  (\textbf{\num[text-series-to-math=true]{29}}) & \textbf{\num[text-series-to-math=true]{29}}  (\textbf{\num[text-series-to-math=true]{29}}) & \SI{28}{\percent} (\SI{28}{\percent})& 1.00 (1.00)& \SI{306}{\second} (\SI{429}{\second})\\
fiasco\_2017-09-26\ldots & \num[text-series-to-math=true]{230} & \num[text-series-to-math=true]{1181} & \num[text-series-to-math=true]{234} & \num[text-series-to-math=true]{225}  (\textbf{\num[text-series-to-math=true]{225}}) & \textbf{\num[text-series-to-math=true]{225}}  (\textbf{\num[text-series-to-math=true]{225}}) & \SI{3.8}{\percent} (\SI{3.9}{\percent})& 1.00 (1.00)& \SI{382}{\second} (\SI{579}{\second})\\
fiasco\_2020-12-01\ldots & \num[text-series-to-math=true]{258} & \num[text-series-to-math=true]{1542} & \num[text-series-to-math=true]{209} & \num[text-series-to-math=true]{196}  (\textbf{\num[text-series-to-math=true]{196}}) & \textbf{\num[text-series-to-math=true]{196}}  (\textbf{\num[text-series-to-math=true]{196}}) & \SI{6.1}{\percent} (\SI{6.2}{\percent})& 1.00 (1.00)& \SI{438}{\second} (\SI{478}{\second})\\
uclibc\_2008-06-05\ldots & \num[text-series-to-math=true]{263} & \num[text-series-to-math=true]{1699} & \textbf{\num[text-series-to-math=true]{505}} & \textbf{\num[text-series-to-math=true]{505}}  (\textbf{\num[text-series-to-math=true]{505}}) & \textbf{\num[text-series-to-math=true]{505}}  (\textbf{\num[text-series-to-math=true]{505}}) & \SI{0}{\percent} (\SI{0}{\percent})& 1.00 (1.00)& \SI{104}{\second} (\SI{67}{\second})\\
uclibc\_2020-12-24\ldots & \num[text-series-to-math=true]{272} & \num[text-series-to-math=true]{1670} & \textbf{\num[text-series-to-math=true]{365}} & \textbf{\num[text-series-to-math=true]{365}}  (\textbf{\num[text-series-to-math=true]{365}}) & \textbf{\num[text-series-to-math=true]{365}}  (\textbf{\num[text-series-to-math=true]{365}}) & \SI{0}{\percent} (\SI{0}{\percent})& 1.00 (1.00)& \SI{108}{\second} (\SI{112}{\second})\\
E-Shop & \num[text-series-to-math=true]{326} & \num[text-series-to-math=true]{499} & \num[text-series-to-math=true]{19} & \num[text-series-to-math=true]{12}  (\num[text-series-to-math=true]{12}) & \num[text-series-to-math=true]{9}  (\num[text-series-to-math=true]{10}) & \SI{37}{\percent} (\SI{37}{\percent})& 1.30 (1.20)& \SI{268}{\second} (\SI{64}{\minute})\\
toybox\_2020-12-06\ldots & \num[text-series-to-math=true]{334} & \num[text-series-to-math=true]{92} & \num[text-series-to-math=true]{18} & \num[text-series-to-math=true]{13}  (\num[text-series-to-math=true]{13}) & \num[text-series-to-math=true]{7}  (\num[text-series-to-math=true]{8}) & \SI{28}{\percent} (\SI{28}{\percent})& 1.71 (1.62)& \SI{532}{\second} (\SI{35}{\minute})\\
DMIE & \num[text-series-to-math=true]{366} & \num[text-series-to-math=true]{627} & \num[text-series-to-math=true]{26} & \textbf{\num[text-series-to-math=true]{16}}  (\textbf{\num[text-series-to-math=true]{16}}) & \textbf{\num[text-series-to-math=true]{16}}  (\textbf{\num[text-series-to-math=true]{16}}) & \SI{38}{\percent} (\SI{38}{\percent})& 1.00 (1.00)& \SI{104}{\second} (\SI{135}{\second})\\
soletta\_2017-03-0\ldots & \num[text-series-to-math=true]{458} & \num[text-series-to-math=true]{1862} & \num[text-series-to-math=true]{56} & \textbf{\num[text-series-to-math=true]{37}}  (\textbf{\num[text-series-to-math=true]{37}}) & \num[text-series-to-math=true]{31}  (\textbf{\num[text-series-to-math=true]{37}}) & \SI{34}{\percent} (\SI{34}{\percent})& 1.16 (1.00)& \SI{387}{\second} (\SI{24}{\minute})\\
busybox\_2007-01-2\ldots & \num[text-series-to-math=true]{540} & \num[text-series-to-math=true]{429} & \num[text-series-to-math=true]{34} & \textbf{\num[text-series-to-math=true]{21}}  (\textbf{\num[text-series-to-math=true]{21}}) & \textbf{\num[text-series-to-math=true]{21}}  (\textbf{\num[text-series-to-math=true]{21}}) & \SI{38}{\percent} (\SI{38}{\percent})& 1.00 (1.00)& \SI{164}{\second} (\SI{237}{\second})\\
fs\_2017-05-22 & \num[text-series-to-math=true]{557} & \num[text-series-to-math=true]{4992} & \num[text-series-to-math=true]{398} & \textbf{\num[text-series-to-math=true]{396}}  (\textbf{\num[text-series-to-math=true]{396}}) & \textbf{\num[text-series-to-math=true]{396}}  (\textbf{\num[text-series-to-math=true]{396}}) & \SI{0.5}{\percent} (\SI{0.5}{\percent})& 1.00 (1.00)& \SI{478}{\second} (\SI{575}{\second})\\
WaterlooGenerated & \num[text-series-to-math=true]{580} & \num[text-series-to-math=true]{879} & \num[text-series-to-math=true]{144} & \textbf{\num[text-series-to-math=true]{82}}  (\textbf{\num[text-series-to-math=true]{82}}) & \textbf{\num[text-series-to-math=true]{82}}  (\textbf{\num[text-series-to-math=true]{82}}) & \SI{43}{\percent} (\SI{43}{\percent})& 1.00 (1.00)& \SI{223}{\second} (\SI{310}{\second})\\
financial\_services & \num[text-series-to-math=true]{771} & \num[text-series-to-math=true]{7238} & \num[text-series-to-math=true]{4384} & \num[text-series-to-math=true]{4368}  (\num[text-series-to-math=true]{4340}) & \num[text-series-to-math=true]{4274}  (\num[text-series-to-math=true]{4336}) & \SI{0.36}{\percent} (\SI{1}{\percent})& 1.02 (1.00)& \SI{862}{\second} (\SI{102}{\minute})\\
busybox-1\_18\_0 & \num[text-series-to-math=true]{854} & \num[text-series-to-math=true]{1164} & \num[text-series-to-math=true]{26} & \num[text-series-to-math=true]{16}  (\num[text-series-to-math=true]{16}) & \num[text-series-to-math=true]{11}  (\num[text-series-to-math=true]{13}) & \SI{35}{\percent} (\SI{38}{\percent})& 1.53 (1.23)& \SI{233}{\second} (\SI{59}{\minute})\\
busybox-1\_29\_2 & \num[text-series-to-math=true]{1018} & \num[text-series-to-math=true]{997} & \num[text-series-to-math=true]{36} & \num[text-series-to-math=true]{22}  (\num[text-series-to-math=true]{22}) & \num[text-series-to-math=true]{17}  (\num[text-series-to-math=true]{21}) & \SI{38}{\percent} (\SI{39}{\percent})& 1.26 (1.05)& \SI{465}{\second} (\SI{60}{\minute})\\
busybox\_2020-12-1\ldots & \num[text-series-to-math=true]{1050} & \num[text-series-to-math=true]{996} & \num[text-series-to-math=true]{33} & \num[text-series-to-math=true]{21}  (\num[text-series-to-math=true]{20}) & \num[text-series-to-math=true]{17}  (\num[text-series-to-math=true]{19}) & \SI{36}{\percent} (\SI{39}{\percent})& 1.19 (1.05)& \SI{407}{\second} (\SI{17}{\minute})\\
am31\_sim & \num[text-series-to-math=true]{1178} & \num[text-series-to-math=true]{2747} & \num[text-series-to-math=true]{60} & \num[text-series-to-math=true]{36}  (\num[text-series-to-math=true]{33}) & \num[text-series-to-math=true]{26}  (\num[text-series-to-math=true]{29}) & \SI{39}{\percent} (\SI{45}{\percent})& 1.36 (1.14)& \SI{699}{\second} (\SI{77}{\minute})\\
EMBToolkit & \num[text-series-to-math=true]{1179} & \num[text-series-to-math=true]{5414} & \num[text-series-to-math=true]{1881} & \num[text-series-to-math=true]{1879}  (\textbf{\num[text-series-to-math=true]{1872}}) & \num[text-series-to-math=true]{1821}  (\textbf{\num[text-series-to-math=true]{1872}}) & \SI{0.1}{\percent} (\SI{0.48}{\percent})& 1.03 (1.00)& \SI{863}{\second} (\SI{47}{\minute})\\
atlas\_mips32\_4kc & \num[text-series-to-math=true]{1229} & \num[text-series-to-math=true]{2875} & \num[text-series-to-math=true]{66} & \num[text-series-to-math=true]{38}  (\num[text-series-to-math=true]{36}) & \num[text-series-to-math=true]{31}  (\num[text-series-to-math=true]{33}) & \SI{41}{\percent} (\SI{45}{\percent})& 1.22 (1.09)& \SI{548}{\second} (\SI{50}{\minute})\\
eCos-3-0\_i386pc & \num[text-series-to-math=true]{1245} & \num[text-series-to-math=true]{3723} & \num[text-series-to-math=true]{64} & \num[text-series-to-math=true]{43}  (\num[text-series-to-math=true]{39}) & \num[text-series-to-math=true]{31}  (\num[text-series-to-math=true]{36}) & \SI{32}{\percent} (\SI{39}{\percent})& 1.38 (1.08)& \SI{621}{\second} (\SI{146}{\minute})\\
integrator\_arm7 & \num[text-series-to-math=true]{1272} & \num[text-series-to-math=true]{2980} & \num[text-series-to-math=true]{66} & \num[text-series-to-math=true]{38}  (\num[text-series-to-math=true]{36}) & \num[text-series-to-math=true]{30}  (\num[text-series-to-math=true]{33}) & \SI{41}{\percent} (\SI{45}{\percent})& 1.28 (1.09)& \SI{681}{\second} (\SI{82}{\minute})\\
XSEngine & \num[text-series-to-math=true]{1273} & \num[text-series-to-math=true]{2942} & \num[text-series-to-math=true]{63} & \num[text-series-to-math=true]{38}  (\num[text-series-to-math=true]{36}) & \num[text-series-to-math=true]{31}  (\num[text-series-to-math=true]{32}) & \SI{39}{\percent} (\SI{43}{\percent})& 1.23 (1.12)& \SI{572}{\second} (\SI{52}{\minute})\\
aaed2000 & \num[text-series-to-math=true]{1298} & \num[text-series-to-math=true]{3036} & \num[text-series-to-math=true]{87} & \num[text-series-to-math=true]{55}  (\num[text-series-to-math=true]{52}) & \num[text-series-to-math=true]{51}  (\num[text-series-to-math=true]{51}) & \SI{36}{\percent} (\SI{40}{\percent})& 1.09 (1.02)& \SI{707}{\second} (\SI{75}{\minute})\\
FreeBSD-8\_0\_0 & \num[text-series-to-math=true]{1397} & \num[text-series-to-math=true]{15692} & \num[text-series-to-math=true]{76} & \num[text-series-to-math=true]{47}  (\num[text-series-to-math=true]{41}) & \num[text-series-to-math=true]{27}  (\num[text-series-to-math=true]{30}) & \SI{38}{\percent} (\SI{46}{\percent})& 1.72 (1.37)& \SI{831}{\second} (\SI{120}{\minute})\\
ea2468 & \num[text-series-to-math=true]{1408} & \num[text-series-to-math=true]{3319} & \num[text-series-to-math=true]{65} & \num[text-series-to-math=true]{38}  (\num[text-series-to-math=true]{36}) & \num[text-series-to-math=true]{31}  (\num[text-series-to-math=true]{32}) & \SI{41}{\percent} (\SI{45}{\percent})& 1.24 (1.12)& \SI{721}{\second} (\SI{67}{\minute})\\

\hline
\ \\
\textbf{optimality} & & & 7 & $\geq 26$ & & \\ & & & [\SI{15}{\percent}] &[\SI{55}{\percent}] & & \\ 
\textbf{improvements} & & & & &\multicolumn{2}{r}{\num[text-series-to-math=true]{40} [\SI{85}{\percent}]} \\ \ \\ \ \\

\end{tabular}
\end{center}
\end{tiny}
\caption{
Comparison of SampLNS with algorithms from the literature. Here, ``Baseline'' indicates the best sample size found by any of the existing algorithms, each having five runs of $\SI{15}{\minute}$.
``SampLNS UB'' shows the sample size found by SampLNS with the initial numbers  indicating the mean over five runs of \SI{15}{\minute} each, i.e., typical outcomes for a short run, while the numbers in parentheses describe the best of five extended runs of up to \SI{3}{\hour}, i.e., the potential outcomes of longer runs.
Similarly, ``SampLNS LB'' is the lower bound, while ``Savings'' quantifies the reduction in sample size achieved by SampLNS compared to previous algorithms.
``SampLNS UB/LB'' illustrates the relationship between upper and lower bounds, with a value of 1.0 implying optimality. 
``Time to bounds''  records the moment when final lower and upper bounds were established, potentially preceding the time limit.
A low value may signal efficiency, yet it could also suggest stagnation, depending on the solution quality at that moment.
}\label{tab:eval:new_benchmark_results}
\end{table*}

The size of the smallest sample found by any existing algorithm for each feature model is listed in the column \emph{Baseline} of \cref{tab:eval:new_benchmark_results}.
The best lower bound found by SampLNS is listed in parentheses in the column \emph{SampLNS LB}, and the column \emph{SampLNS UB} shows in parentheses the sample size of the smallest known sample (all achieved by SampLNS), defining the range in which the smallest possible sample can lie.
For \num{7} of the \num{47} feature models, the smallest sample found by any existing algorithm is equal to the lower bound.
For at least \num{28} feature models, the difference to the optimum is above \SI{20}{\percent}.
\Cref{fig:eval:diff_to_lb} displays the differences to the lower bounds for each algorithm separately.
A value of \SI{20}{\percent} would imply that the sample size is $1.2\times$ the lower bound.
We can see that of the previous algorithms, YASA with $m=10$ (resp. the \SI{15}{\minute}-variant) achieves the smallest samples, with a median of \SI{46}{\percent} (resp. \SI{42}{\percent}) above the lower bound.
The other algorithms yield significantly larger samples, with Incling yielding the largest samples, with a median of more than twice the size of the lower bound.
As not all feature models could be solved by all algorithms, the number of feature models solved by each algorithm is added in square brackets.
Only the YASA variants and SampLNS could compute full samples for all feature models in time.
Incling and ICPL failed for one feature model, namely \texttt{financial\_serivces}, to compute a feasible sample within the time limit.
The other algorithms failed on multiple feature models.

\subsubsection{(RQ2) \rqtwo} 

The mean sample sizes over five runs of SampLNS (with a \SI{900}{\second} time limit) are listed in the column \emph{SampLNS UB} of \cref{tab:eval:new_benchmark_results}.
The values in parentheses are the smallest sample size found by SampLNS when running for up to \SI{3}{\hour}, and only indicate the potential of SampLNS to further reduce the sample size.
For \num{40} (\SI{85}{\percent}) of the models, the \emph{mean} sample size of SampLNS (after \SI{900}{\second}) was smaller than the \emph{best} samples of any other algorithm.
For \num{33} (\SI{70}{\percent}) of the models, the mean sample size of SampLNS was at least \SI{10}{\percent} smaller than the smallest sample found by any other algorithm.
At least \SI{55}{\percent} of the samples by SampLNS are provably optimal.
On average, the samples of SampLNS are \SI{22}{\percent} smaller than the smallest sample found by any existing algorithm.
The first bar in \cref{fig:eval:diff_to_lb} shows that the \SI{75}{\percent} quantile is with \SI{10}{\percent} smaller than the \SI{25}{\percent} quantile of YASA ($m=10$) with \SI{11}{\percent}.
To evaluate the variance of SampLNS, we computed the standard deviation of the relative differences from the best value, expressed as percentages.
The mean standard deviation is \SI{0.47}{\percent} for the upper bound and \SI{2.62}{\percent} for the lower bound.

\subsubsection{(RQ3) \rqthree}
\begin{figure}[tb]
    \centering
    \includegraphics[width=0.6\columnwidth]{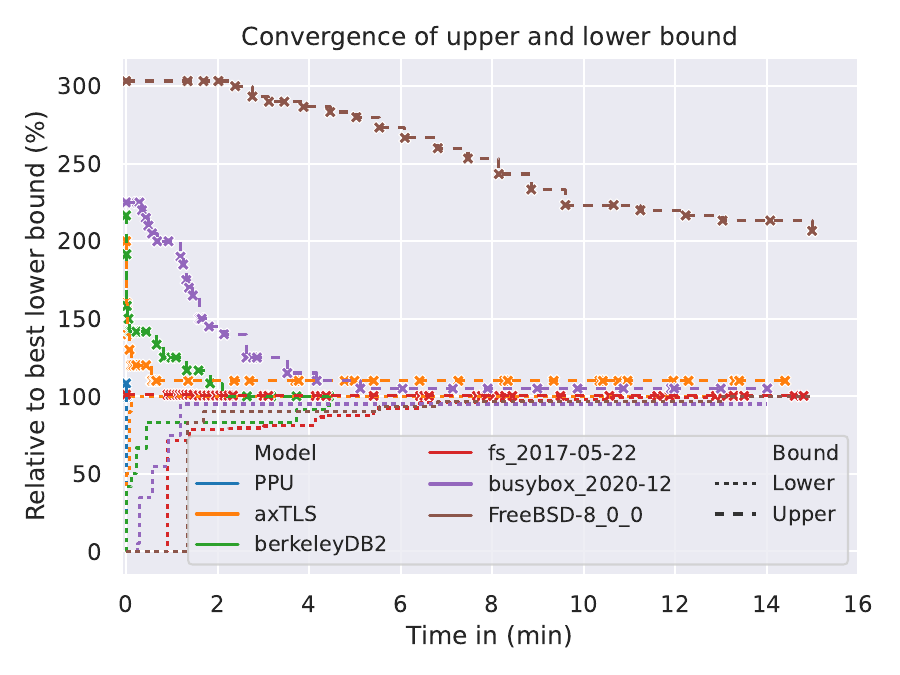}
    \caption{
    Convergence of lower and upper bound over time for a selection of SampLNS runs.
    Values are relative to the best lower bound of the model,
    so lower and upper bound meeting at \SI{100}{\percent} indicates provable optimality. 
    Lower and upper bound usually make quick progress already in the first seconds and minutes.
    The crosses on a curve indicate the end of an upper bound optimization step.
    The lower bound is only queried at the same time, such that not all increments are shown.
    }
    \label{fig:eval:convergence_lb_ub_selection}
\end{figure}
The average times achieved by SampLNS until the last improvement on the lower or upper bound are listed in column \emph{Time to bounds} in \cref{tab:eval:new_benchmark_results}.
At this time we could have aborted SampLNS while still achieving the same result.
If SampLNS achieved a provably optimal sample size, it aborted at this time. 
For \num{21} of the feature models, this time is below \SI{3}{\minute}.
\Cref{fig:eval:convergence_lb_ub_selection} shows the convergence of lower and upper bound over time for a selection of models.
The curves for the remaining models are attached as artifact.
 \emph{PPU} starts with an already optimal sample and terminates quickly after the matching lower bound has been found.
 The sample of \emph{axTLS} gets quickly optimized but no matching lower bound can be found.
 For \emph{berkeleyDB2}, we achieve strong improvements in the first seconds and then make continuous progress for two minutes.
 Optimality is proved in around four minutes.
 \emph{fs\_2017-05-22} already starts with a nearly optimal sample, which we only improve slightly and it takes around seven minutes until we have a nearly tight lower bound.
 We also see multiple short optimization steps interleaved by a few longer steps.
 \emph{busybox\_2020-12-16} improves continuously over the first five minutes, first with many quick steps that become longer at the end.
 \emph{FreeBSD-8\_0\_0} needs a few optimization steps before it makes any progress, but then makes continuous progress until the timeout.

\subsubsection{(RQ4) \rqfour}
\begin{figure}[htb]
    \centering
    \includegraphics[width=0.6\columnwidth]{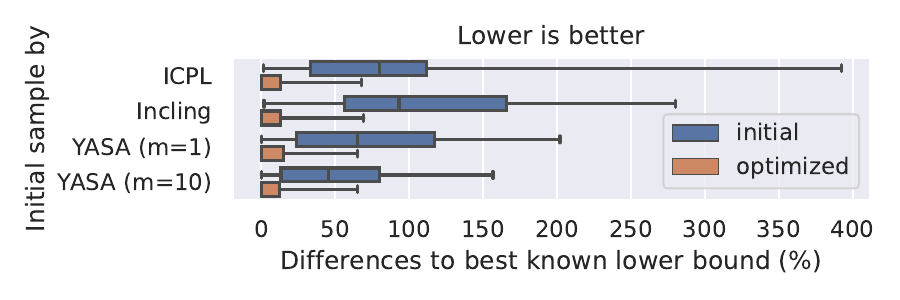}
    \caption{
        Analogous to \cref{fig:eval:diff_to_lb}, but comparing different versions of SampLNS where the initial sample was produced by different algorithms, to compare the dependence of SampLNS on the size of the initial sample.
        ICPL and Incling have significantly larger initial samples than YASA ($m=10$) as can be seen by the blue boxes, but SampLNS can still reduce them to a similar size as can be seen by the orange boxes.
        For YASA ($m=1$) and YASA ($m=10$), the optimized samples are nearly identical, despite a visible difference in the initial samples.
        This indicates that the size of the initial sample is of low importance for SampLNS.
    }\label{fig:initial_sample_comparison}
\end{figure}
\Cref{fig:initial_sample_comparison} shows the variance of the sample sizes of SampLNS when exchanging the initial sample with samples generated by other algorithms.
The runtime for computing the initial sample is not included in the time limit of \SI{15}{\minute}, to allow for a fair comparison.
Only the \num{46} feature models for which all algorithms could compute an initial sample in time are included.
The sample sizes of the four selected algorithms have shown a high variance in \cref{fig:eval:diff_to_lb}, but after optimization with SampLNS they all have a \SI{75}{\percent} quantile of \SIrange{12}{15}{\percent} above the lower bound.
The sample sizes of SampLNS are nearly identical when using YASA~($m=1$) and YASA~($m=10$), despite a visible difference in the initial samples.
While the other two algorithms result in slightly larger samples, the difference is small when compared to the differences between the initial samples of the algorithms themselves.

\subsection{Discussion}

\subsubsection{(RQ1) \rqone}
Yes, but only for few feature models in our benchmark set.
For most feature models, none of the algorithms could compute a sample within \SI{20}{\percent} of neither the lower bound nor the smallest sample found using SampLNS.
The best performing algorithm in our benchmark, YASA with $m=10$, achieves for only \SI{25}{\percent} of the samples a gap of less than \SI{13}{\percent} (cf. \cref{fig:eval:diff_to_lb}).
As the effect of larger $m$ for YASA is quickly diminishing, it also cannot be expected that fundamental improvements can be achieved by just increasing~$m$.
This is demonstrated by comparing YASA with \( m=10 \), which takes an average of \SI{15}{\second}, to the variant that utilizes the full \SI{900}{\second} yet achieves only marginal improvements.

\subsubsection{(RQ2) \rqtwo}
Yes, and often substantially.
SampLNS could minimize the samples for \SI{85}{\percent} of models in our benchmark set, reducing \SI{55}{\percent} of them to minimality and \SI{82}{\percent} below $1.2\times$ the lower bound.
This not only highlights the performance of SampLNS, but also the quality of the lower bound for many feature models.
It is worth mentioning that even small improvements on sample size are relevant for practical applications, as testing effort in terms of time and cost can be considerable for each tested configuration~\cite{HNA+:EMSE19}.
Especially in industrial product lines, consisting not only of software but also hardware components, for instance in the automotive domain, the effort and cost for assembling and testing even a single product can be enormous~\cite{HPS+:GPCE22}.

However, there are some feature models where the average sample size is still more than $\num{1.7}\times$ the lower bound.
The sample size for the \emph{FreeBDS-8\_0\_0} feature model was on average \num{47}, whereas the best lower bound is only \num{30}, leaving a large gap.
Determining whether this gap stems from limitations of the lower bound or the sample generation is challenging, given the absence of knowledge regarding the true optimal sample size.
The extended runtime experiments suggest there is room for enhancement on both fronts for models that have not been solved to optimality within \SI{15}{\minute}.

The standard deviation of the lower bound, at \SI{2.62}{\percent}, is noticeably higher than that of the sample size, which is \SI{0.47}{\percent}.
However, both values are relatively low, indicating that SampLNS consistently delivers reliable and good results.

\subsubsection{(RQ3) \rqthree}

SampLNS is efficient enough to optimize samples to (near-)optimality and compute lower bounds in seconds for small and in minutes for many medium-sized feature models.
On most models, SampLNS makes visible improvements on the first few optimization steps, indicating a usefulness even for short optimization times.
However, for some models the lower and upper bounds could still be visibly improved when running SampLNS for up to \SI{3}{\hour} instead of only \SI{900}{\second}.
The runtime of SampLNS is of course significantly higher than YASA~($m=1$), but while increasing the optimization parameter $m$ of YASA strongly increases the runtime and only leads to small improvements, SampLNS can achieve much larger improvements with a reasonable increase in runtime.

\subsubsection{(RQ4) \rqfour}

No, the quality of the initial sample has only a negligible impact on SampLNS.
Even when using the algorithm with the largest initial sample, SampLNS can still reduce the sample size to a similar size as when using the algorithm with the smallest initial sample.
In particular, the increased optimization of YASA ($m=10$) does not lead to a visible difference in the optimized sample size.
This indicates that the quality of the initial sample is of low importance for SampLNS, and YASA could indeed be replaced by other algorithms without a significant impact on the results.
As YASA ($m=1$) is with an average of \SI{2.1}{\second} the fastest algorithm for generating an initial sample, it remains the best choice for SampLNS.

\subsection{Threats to Validity}

\subsubsection*{Internal Validity}

With SampLNS, we reduce the sample size of a given sample, which brings the risk of lowering the $t$-wise coverage and introducing invalid configurations.
To verify that the optimized samples are indeed correct (excluding errors in parser), we check the interactions of the initial samples for equality with the interactions of the optimized sample, and each configuration for feasibility.
Finally, the coverage of all samples of all algorithms are checked for equality to also ensure correctness of the baseline algorithms.

In our evaluation, we use multiple sampling algorithms, as baseline and as input for SampLNS.
Some algorithms are sensitive to the feature order of the input feature model, which leads to different samples.
To mitigate the risk of this potential bias, we used a large corpus of different feature models and, in addition, repeated the sampling process for each feature model with each algorithm five times and selected the mean of the results.
Furthermore, some sampling algorithms allow for additional parametrization.
In particular, YASA allows to set its parameter $m$ for the number of resamplings, which has an effect on the sample sizes.
To mitigate this threat, we run YASA with different values for $m$, which are within a range that was also used in other measurements~\cite{KTS+:VaMoS20}.

Repeated runs of SampLNS on the same feature model can lead to vastly different samples and certificates for the lower bound due to randomization.
To mitigate this threat, we run SampLNS five times for each feature model and use the mean of the results.
While the quality may differ, the randomization has no effect on the correctness of the results, as each optimization step
preserves a valid t-wise sample.

We are comparing the sample sizes relative to the best lower bound found by SampLNS in \cref{fig:eval:diff_to_lb,fig:eval:convergence_lb_ub_selection,fig:initial_sample_comparison}.
However, the lower bound may not guarantee the absolute smallest sample size; it only ensures it is not larger.
If we had a more precise lower bound, it could highlight differences more, showing our current approach tends to be cautious, possibly even disadvantaging SampLNS.

\subsubsection*{External Validity}

In our evaluation, we used a wide range of different feature models \cf{\autoref{subsec:experiment_design}}.
In this study, we took care to include feature models from many relevant domains, but did not include some large-scale systems, such as the Linux kernel.
At this time, this leaves generalizing our results for feature models of other domains and with 
even higher numbers of features for followup studies; see~\cref{sec:outlook} for an outlook.

We run SampLNS with specific parameter settings \cf{\autoref{sec:implementation}}.
Thus, we cannot generalize our results to other settings.
However, as we did not focus on optimizing our initial settings, we can assume that an optimized parameter set \eg{by using hyper parameter tuning} can yield even better results.

All sampling algorithms that we used in the evaluation are based on greedy strategies.
While there are evolutionary algorithms that also guarantee a full $t$-wise coverage~\cite{GCD:ESE11,MGSH:SPLC13},
previous work suggests that such algorithms can be expected to produce results similar to the algorithms we considered~\cite{AKT+:GPCE16}.

\section{Related Work}
In the following, we state our work's relation to works on finding upper and lower bounds for $t$-wise samples, using other sample quality metrics and sampling strategies, applying general methods on test effort reduction for configurable systems, and computing covering arrays \ie{samples} for unconstrained configurable systems.

\subsubsection*{Lower Bounds for $t$-Wise Samples}

Yamada et al.~\cite{yamada2015optimization} introduced CALOT for combinatorial testing, which uses an incremental SAT formulation that narrows down the sample size from an upper limit until it becomes infeasible, ultimately yielding a minimal sample.
CALOT shares similarities with the CP-SAT formulation discussed in \cref{sec:upper-bounds:cpsat}, but it employs a less powerful symmetry-breaking technique that would only yield weak lower bounds on binary features.
Unlike our LNS approach, which partitions the solution space to maintain a manageable CP-SAT formulation size, CALOT's SAT formulation can become unwieldy.
Another drawback is that its guarantee of optimality relies on the correctness of its implementation, and it cannot provide quality assurances if it runs out of time.
While SAT-solvers can provide proofs for infeasibility, those proofs are significantly more complex than our certificates.
Building on CALOT, Ans{\'o}tegui et al.~\cite{ansotegui_et_al:LIPIcs.CP.2021.12,ansotegui2022incomplete} enhanced algorithm scalability, optimizing instances with even $t=3,4,5$, but incomplete coverage.
While using a refinement operator, the approaches are not as dynamic as our LNS.
Furthermore, their lower bounding still just relies on a subset of $t$ combinatorial features, unlike our comprehensive computation of maximal mutually excluding sets, potentially spanning all available features, which is essential for models only containing binary features.

\subsubsection*{Upper Bound for $t$-Wise Samples}

There are numerous sampling algorithms for configurable systems that are designed for different use cases, based on different sampling strategies~\cite{VAT+:SPLC18,AZAB17,LFRE:IWCT15}.
Most $t$-wise sampling algorithms, such as YASA~\cite{KTS+:VaMoS20}, IncLing~\cite{AKT+:GPCE16}, UnWise~\cite{HSO+:VAMOS24}, ScalableCA~\cite{luo2024beyond}, 
ICPL~\cite{MHF:SPLC12}, IPOG~\cite{LK+:ECBS07}, Chv\'{a}tal's algorithm~\cite{JHF:MODELS11}, CASA~\cite{GCD:ESE11}, Alloy-based sampling~\cite{PSK+:ICST10}, MoSo-PoLiTe~\cite{OMR:SPLC10}, CALOT~\cite{yamada2015optimization}, and AETG-SAT~\cite{CDS:TSE08} focus on finding small samples for full $t$-wise coverage.
Consequently, any sample provided by one of these algorithms determines an upper bound for the used feature model; this is also the case for a recently proposed local search heuristic by Zhao et al.~\cite{zhao2023campactor} on a different set of benchmarks.
However, the distance from an optimal size of a $t$-wise sample (certified by a tight lower bound) was only known for very small instance sizes.   This differs from our work, which focuses on computing \emph{both}, upper \emph{and} lower bounds that are close or equal to the optimal size of a pairwise sample.

\subsubsection*{Sample Quality Metrics}

SampLNS, tries to find optimal $t$-wise samples in terms of sample size.
However, the quality of sampling algorithms can also be assessed by other criteria~\cite{VAT+:SPLC18}, including sampling efficiency~\cite{PKT+:SPLC24,KTS+:VaMoS20,AKT+:GPCE16,MHF:SPLC12,GCD:ESE11,MGSH:SPLC13,HLHE:VaMoS13,OZML:VaMoS11}, effectiveness~\cite{HNA+:EMSE19,MKR+:ICSE16,SRK+:SQJ12,AKT+:GPCE16}, and stability~\cite{PKR+:VaMoS21}.
Regarding sampling efficiency, SampLNS allows to specify a time limit for optimizing a sample, similar to many evolutionary algorithms~\cite{HPP+:TSE14,CR:IJAST14,HPL:SBSE14,EBG:CAiSE12,FLV:SBES17,HPP+:SPLC13}.
%
Determining the effectiveness of a sample is dependent on the particular use case, such as fault detection potential for testing~\cite{HNA+:EMSE19,MKR+:ICSE16}.
However, as this is hard to evaluate, many other coverage criteria are used as a proxy instead~\cite{VAT+:SPLC18}.

\subsubsection*{Beyond $t$-Wise Sampling}

SampLNS considers $t$-wise interaction coverage, but there exist many other coverage criteria for which samples can be generated~\cite{VAT+:SPLC18}, such as partial $t$-wise coverage~\cite{MKR+:ICSE16,BLK:ESECFSE20,KSS:VariComp13,JHF+:MODELS12},
coverage of the solution space~\cite{HPS+:GPCE22,KTS+:TR22,SCD:FASE12,TLD:OSR12},
coverage of test cases~\cite{HSMI:ICST14,KMKB:ESECFSE13,KBK:AOSD11},
uniform sampling~\cite{HFGB:SPLC20,PAP+:ICST19,AZT:SAT18,SGRM:LPAIR18,OGB:SPLC19},
distance-based sampling~\cite{KGS+:ICSE19,HPP+:TSE14}, and
coverage of feature model mutations~\cite{RBR+:SPLC15,AGV:ICST15,HPL:SBSE14}.
As some of these coverage criteria are correlated to $t$-wise coverage, it is reasonable to assume that LNS may be employed to reduce the size of corresponding samples as well.

\subsubsection*{Test Effort Reduction for Configurable Systems}

The ultimate goal of reducing sample sizes, in context of testing configurable systems, is to reduce the test effort.
Besides sample size reduction, this goal may also be achieved by, for instance, test case selection~\cite{BL:FOSD14,HSMI:ICST14}, test case prioritization~\cite{LATS:VaMoS17,ALL+:VACE17,ATL+:SoSyM19}, and regression testing~\cite{YH:STVR12,LMTS:ICSR16,LNTS:JSS19,NNT+:SEN19}.
Such approaches are orthogonal to our technique and may be used complementary to SampLNS.

\subsubsection*{Analysis Strategies for Configurable Systems}

Using sample-based testing, such as t-wise sampling, is just one strategy for analyzing interactions in configurable systems.
Alternative approaches encompass family-based~\cite{WMK:ESECFSE18,WMLK:OOPSLA18,TAK+:CSUR14,NNT+:SEN19} and regression-based analysis~\cite{YH:STVR12,LMTS:ICSR16,LNTS:JSS19,NNT+:SEN19}.
In contrast to sample-based testing, which can rely on single-system tooling, family- and regression-based analyses require special tooling to work~\cite{TAK+:CSUR14}.
In addition, family-based testing often does not scale for large-scale systems~\cite{KRE+:FOSD12,MWK+:ASE16}.

\subsubsection*{Covering Arrays for Unconstrained Configurable Systems}
Finding lower and upper bounds for unconstrained combinatorial interaction testing has received intensive attention~\cite{CDFP:TSE97,Hartman2005,CKO:DM12,KS:DM73,Katona1973}.
For unconstrained configuration spaces, Cohen et al.~\cite{CDFP:TSE97} show that the minimal size for a $t$-wise sample grows logarithmically in the number of features.
Furthermore, for unconstrained Boolean feature models, optimal pairwise samples are known for any number of features~\cite{Hartman2005,KS:DM73,Katona1973} and reasonable lower and upper bounds are known for $t$-wise samples~\cite{Hartman2005,CKO:DM12}.
We consider constrained Boolean feature models, which makes finding an optimal $t$-wise sample and even lower and upper bounds substantially more difficult~\cite{WNP+:arxiv19}.

\section{Conclusion}
\label{sec:outlook}

Quality assurance of modern software systems is often challenged by their large configuration spaces. 
The sample size is of crucial importance, as it is generally considered the dominating factor for the test effort for configurable systems~\cite{CDS:TSE08,Hartman2005,KS:DM73,Katona1973}.
In the last two decades, many heuristics have been proposed to derive sample configurations, but the quality of their results could only be compared to other heuristics so far. 
In particular, it has been unclear \emph{how low we can go} when minimizing sample size of $t$-wise interaction samples,
i.e., how close heuristically achieved solutions actually are to being optimal.

With SampLNS, we propose a new algorithm that offers a significant advancement in minimizing the size of $t$-wise interaction samples to near-optimality.
Importantly, it also provides easily verifiable lower bounds to prove the solution quality.
Our algorithm relies on Constraint Programming and Mixed Integer Programming,
which are scaled to the complexity of configuration spaces by means of a large
neighborhood search.  Our empirical evaluation shows that SampLNS scales
to configuration spaces with at least up to 1400 configuration options. 
For \num{26} of \num{47} subject systems, sample sizes of SampLNS match the lower bounds,
so we can provide certificates that prove their optimality.
In contrast, the samples computed by any previous algorithm are optimal for only seven systems. 
Even if samples generated with SampLNS are not guaranteed to be optimal for \num{21}
systems, sample size and lower bounds are often close (e.g., off by only one for five
systems).  In addition, SampLNS was able to improve the sample size for \num{40} of
\num{47} systems compared to the previous best algorithm for each system.

Our algorithm significantly reduces the testing effort for small- and medium-sized configurable systems.
Additionally, our lower bounds offer a clear framework for potential improvements, enabling the differentiation between instances with a small remaining optimality gap (and thus minimal room for improvement) and those with a large gap (and consequently greater potential for improvement).
These insights open up various directions for future work.

One promising avenue is extending the range of instance sizes that can be solved to provable optimality;
currently, the largest such instance is \texttt{EMBToolkit} with \num{1179} features.
Achieving this goal may be possible through a combination of hyper-parameter tuning and more efficient data structures.
While we do not expect this approach to yield provably optimal solutions for very large instances of practical importance 
--- such as those arising from the Linux kernel, which features nearly \num{20000} options ---
we remain optimistic that a spectrum of Algorithm Engineering methods could provide substantial improvements.
These may include further refinements to achieve better upper bounds and mathematical enhancements to the dual problem, aimed at generating provable lower bounds when the current dual leaves a non-trivial duality gap.

Moreover, we are hopeful that the fundamental ideas demonstrated in this work can be applied to other computationally complex problems in configurable systems beyond minimizing the sample size in $t$-wise interaction sampling.

\paragraph*{Acknowledgments}
This work was partially supported through the projects
``Quantum methods and Benchmarks for Ressource Allocation'' (QuBRA), German Federal Ministry for Education and Science (BMBF), grant number 13N16053  and ``Quantum Readiness for Optimization Providers'' (ProvideQ), German Federal Ministry of Economic Affairs and Climate Action (BMWK), grant number 01MQ22006A.

All authors contributed significantly to the work. The order of the first three authors is alphabetical.

\bibliographystyle{unsrt}
\bibliography{MYabrv,literature,main}

\end{document}